%% file: elicitation_arxiv.tex
\documentclass[11pt]{article}

\usepackage[T1]{fontenc}
\usepackage[margin=1in]{geometry}
\usepackage{amsmath,amssymb,amsthm,amsfonts,mathtools}
\usepackage{graphicx}
\usepackage{url}

\newtheorem{theorem}{Theorem}[section]

\newtheorem{lemma}[theorem]{Lemma}

\theoremstyle{remark}
\newtheorem{remark}[theorem]{Remark}

\newcommand{\E}{\mathrm{E}}

\newcommand{\Prob}{\mathrm{P}}

\newcommand{\eps}{\varepsilon}

\DeclarePairedDelimiter\paren{(}{)}

\DeclarePairedDelimiter\abs{\lvert}{\rvert}
\DeclarePairedDelimiter\norm{\lVert}{\rVert}
\DeclarePairedDelimiter\bkt{[}{]}
\DeclarePairedDelimiter\set{\{}{\}}

\title{A Least-Squares Approach to Sample-Based Prior Elicitation}
\author{Yannik Pitcan\thanks{The foundations of this work were developed while
the author was a Ph.D.\ candidate in the Department of Statistics, University
of California, Berkeley; the extensions in Sections~\ref{sec:scale},
\ref{sec:multivariate} and~\ref{sec:semisynthetic} were developed subsequently.
The author is an independent researcher. Correspondence: pitcany@gmail.com.}}
\date{\today}

\begin{document}
\maketitle

\begin{abstract}
An expert who supplies examples of a quantity often also signals how plausible
each one is; when is that signal worth using? We study eliciting
a Bayesian prior from an expert who provides example points together with their
approximate likelihoods. We propose fitting the prior by least squares---%
minimizing the squared discrepancy between a parametric density and the
elicited likelihoods---which defines an M-estimator that remains well posed even
for families whose moments do not exist. We establish
consistency and asymptotic normality,
and prove---under explicit regularity conditions, comprising a well-separation
and a uniform-concentration requirement that we verify for the families
considered---a non-asymptotic $O(1/\sqrt{n})$ Berry--Esseen bound on its
sampling distribution, uniform and nonuniform, by extending a result of
Pinelis for maximum-likelihood estimators to the M-estimation setting. We then
relax the assumptions that most limit the method in practice. Experts need not
report on the density's own scale: an unknown reporting scale can be profiled
out in closed form and estimated jointly, at no asymptotic cost for location
families. The theory extends to multivariate parameters, where a directional
Berry--Esseen bound follows from the multivariate delta method applied to a
smooth implicit proxy for the estimator. An additive error floor in the noise
model removes a degeneracy in the optimal design, making optimal designs
interior. Simulations for
normal and beta families confirm the predicted $n^{-1}$ error rate and the
accuracy of the normal approximation at moderate sample sizes. Finally, we
compare the estimator with the sample-only maximum-likelihood baseline,
derive an explicit threshold on the expert's reporting noise below which the
elicited likelihoods provably reduce estimation error, and calibrate that
threshold against eleven datasets of human frequency judgments.
\end{abstract}

\input{elicitation_body}

\section*{Acknowledgments}

An AI assistant was used in preparing this paper. The framework and the
original theory---the least-squares objective, the consistency and asymptotic
normality results, and the extension of Pinelis's Berry--Esseen bound from
maximum-likelihood estimators to the M-estimation setting---are the author's
own. The assistant was used to draft the extensions in
Sections~\ref{sec:scale}, \ref{sec:multivariate} and~\ref{sec:semisynthetic},
the additive-error-floor analysis of Remark~\ref{rem:floor}, and all simulation
code. The author read, re-derived and verified every analytical result in this
paper, and independently checked the reported numerical results against the
closed-form asymptotic predictions where these are available. The author takes
full responsibility for the contents; any remaining errors are the author's
own.

\section*{Code availability}

Code to reproduce every figure and every reported number is available at
\url{https://github.com/pitcany/prior-elicitation}, in the \texttt{experiments/}
directory: \texttt{ch5\_elicitation.py} generates
Figures~\ref{fig:elicitation-success}--\ref{fig:elicitation-normality},
\texttt{ch5\_ls\_vs\_mle.py} generates Figure~\ref{fig:ls-vs-mle},
\texttt{ch5\_design.py} produces the design comparison of
Section~\ref{sec:sample-based}, \texttt{ch5\_cauchy.py} produces the
no-moments results of Section~\ref{sec:cauchy}, \texttt{ch5\_scale.py}
produces Figure~\ref{fig:ch5-scale} and the sandwich checks of
Section~\ref{sec:scale}, \texttt{ch5\_multivariate.py} produces
Figure~\ref{fig:ch5-multivariate}, \texttt{ch5\_floor.py} produces the
error-floor designs of Remark~\ref{rem:floor}, and
\texttt{ch5\_semisynthetic.py} produces Figure~\ref{fig:ch5-semisynthetic}.
The human frequency-judgment data analysed in
Section~\ref{sec:semisynthetic} is included under
\texttt{experiments/data/risk\_judgments/} together with the script that
retrieves it from its source repository \cite{Pachur2024}. Package versions are
pinned in \texttt{requirements.txt}. Every script is seeded and reproduces the
reported figures and numbers exactly.

\end{document}

%% file: elicitation_body.tex
\section{Statistical Elicitation}

Elicitation is the process of forming a probability distribution from a person's knowledge and beliefs \cite{OHagan2006,Garthwaite2005}. 
We will focus on the case of elicitation to obtain a prior that will be used in a subsequent machine learning task. Although most of the results will be applicable to other motivations for elicitation, narrowing the language will simplify our discussion. 

Classically, elicitation is a human-centered process with multiple roles:
The \emph{modeler} will ultimately do the modeling, with the elicited prior. 
The \emph{facilitator} has a strategy and asks questions to gather information to use for inference.  
The \emph{expert} has the knowledge that the facilitator will use. 
A \emph{statistician} will train the expert on probability and provide feedback. 

An individual may fill multiple roles; for example, a single individual commonly fills both the statistician and facilitator roles.  
The expert may also be the modeler who will ultimately use the elicited prior.

Elicitation is a multi-stage process, typified by the following steps.  The modeler and the statistician will determine the target value in collaboration in the structuring and decomposition step. Then, during the elicitation phase there is further iteration over three steps: 1. elicit summaries, 2. fit a distribution, and 3. assess adequacy. 
The elicitation process is our focus for the presented work, as the fitting and assessment steps are the primary role of the automated tool. 

In higher dimensions, summaries are less intuitive and even cumbersome to communicate.  Therefore, we will, in our automated facilitator-statistician discussion, shift from eliciting summaries to eliciting samples.
Sample based elicitation has been applied in an experimental setting successfully for fitting distributions in commonly used univariate data models, going back to the comparative study of beta prior elicitation techniques by Winkler \cite{Winkler1967}; more recently, Casement and Kahle \cite{CasementKahle2018} elicit priors implicitly through an expert's selections among graphics of hypothetical future samples. See \cite{Mikkola2023} for a broad review of prior elicitation methods.

Literature on elicitation focuses on making inferences from the type of information provided by elicitation and the related psychological literature. The psychology of elicitation relates to how people characterize uncertainty (not consistently) to what information is actually needed in order to make inferences about uncertainty that are themselves useful for further inferences. 

In the human--computer interaction and visualization communities, elicitation has been studied with crowdworkers; for example, Goldstein and Rothschild \cite{GoldsteinRothschild2014} show that eliciting an entire distribution through a graphical frequency-based interface and computing statistics from it yields greater accuracy than asking for those statistics directly.

Toward the study of Bayesian modeling in a broad sense, HCI researchers have built tools for eliciting specific forms of priors \cite{CasementKahle2018,SarmaKay2020}.

Observing how statisticians set priors revealed that the choice of visualization can impact how experienced Bayesian statisticians choose to set a prior \cite{SarmaKay2020}. In designing a more general prior elicitation tool, it will be important to understand what forms of information will facilitate good inferences and to balance these forms with what psychological insights exist regarding how experts choose matching interfaces. Further research has examined what information experts are able to express reliably, and how visualizations impact the broad strategies of the expert. In this work, we consider the learnability of classes of priors from different forms of evidence toward making tool design choices. 


\section{Sample Based Elicitation}
\label{sec:sample-based}

Once elicited quantities are in hand, a distribution must be fitted to them, and practice has largely settled on least squares applied to the distribution function. The Sheffield framework chooses parameters by minimizing the squared discrepancy between elicited and fitted cumulative probabilities \cite{OakleyOHagan2019}, and fitting a parametric distribution to elicited summaries in this manner is standard across the literature \cite{Garthwaite2005,OHagan2006}. What is elicited, in each case, is a small number of \emph{summaries}: quantiles, probabilities, occasionally a mean.

Our proposal shares the least-squares principle but differs in both the elicited object and the residual. We elicit examples together with their reported plausibilities, and fit on the \emph{density} at those points rather than on the distribution function at elicited quantiles. The motivation is the one given above: summaries grow harder to communicate as the dimension increases, whereas examples remain natural to supply. Fitting on the density has a second consequence---the criterion stays well posed for families whose moments do not exist, and for which moment-based summaries are therefore unavailable---which we demonstrate in Section~\ref{sec:cauchy}.

This choice runs against a standing recommendation. Mikkola et al.\ \cite{Mikkola2023} observe that expressing knowledge in probabilistic terms is already hard for experts, ``let alone asking directly for the full density function,'' which is precisely why the field elicits summaries instead. We do not assume that experts report densities well. Their unreliability is the parameter $\sigma$ of the noise model $z=p_{\theta_0}(x)(1+\sigma\xi)$ used throughout Section~\ref{sec:elicitation-experiments}, and Section~\ref{sec:ls-vs-mle} quantifies how large $\sigma$ may be before the reports cease to be worth using---roughly $60\%$ relative error for the families we consider. The concern is therefore not set aside but priced.

\begin{remark}[Reports on an unknown scale]\label{rem:scale}
We take the reported $z_i$ to lie on the density's own scale, up to the multiplicative error $\sigma\xi$. An expert may instead report plausibilities on an arbitrary scale, $z=c\,p_{\theta_0}(x)(1+\sigma\xi)$ with $c>0$ unknown, in which case $\hat\theta$ as defined above is in general not consistent for $\theta_0$ and $c$ must be estimated jointly. Section~\ref{sec:scale} carries out the extension: the scale coordinate can be profiled out in closed form, so the joint fit costs nothing computationally, and estimating the scale costs nothing \emph{asymptotically} either for location families---for the beta shape family it even reduces the asymptotic variance slightly.
\end{remark}

In order to build a general automated elicitation tool, we need to consider how the tool will learn from the expert. In the end, this learning will be an online process which learns from each sample sequentially and then presents the updated model to the user for feedback. 
The estimator introduced below is a nonlinear least-squares fit in which the elicited points $x_i$ act as \emph{design points}, so their placement, and not merely their number, governs the precision of the fit: under a design measure $q$ the asymptotic variance is $\sigma^{2}\,\E_q[p_{\theta_0}^{2}\dot p_{\theta_0}^{2}]/\big(n(\E_q[\dot p_{\theta_0}^{2}])^{2}\big)$, and drawing the $x_i$ i.i.d.\ from the expert's belief is only one choice of $q$. In elicitation the examples are supplied by a human expert following instructions, and we expect them to be more spread out than an i.i.d.\ draw for two reasons. First, an expert is unlikely to give an example very close to one already given, which leads to a representative sample spanning the range of their belief. Second, the instructions can prompt the expert for examples that are both likely and unlikely. Spreading the design does help, in both families. At $n=30$ and $\sigma=0.1$, replacing the i.i.d.\ design of the normal location family by a uniform design on $[\theta_0-2,\theta_0+2]$ lowers the mean squared error by $32\%$, and an equispaced design on the same interval by $31\%$. For the beta shape family the designs must instead be subsets of $(0,1)$, since $\theta$ is a shape parameter; there a uniform design lowers the error by $17\%$ and an equispaced design by $13\%$. In each family the i.i.d.\ design is the least precise of those we tried.

Placement matters more than spread as such. A uniform design on $[\theta_0-3,\theta_0+3]$ in the normal family gains nothing over i.i.d., because the extra width places points where $\dot p_{\theta_0}\approx0$; and in the beta family a two-point design pairing an informative point at $x=0.05$ with a nearly uninformative one at $x=0.5$ is almost seven times \emph{worse} than i.i.d. Across the designs that improve on i.i.d., the measured error agrees with the asymptotic variance above to within $6\%$. We therefore read the i.i.d.\ experiments of Section~\ref{sec:elicitation-experiments} as a conservative reference point rather than a best case, while noting that we do not prove i.i.d.\ sampling is worst-case over all designs.

\begin{remark}[The optimal design is degenerate under this noise model]
\label{rem:degenerate-design}
The calculation above should not be read as advice on where to question an expert, because pursued to its conclusion it gives absurd advice. For a design concentrated at a single point $x_0$ the asymptotic variance reduces to $\sigma^{2}/\big(n\,s(x_0)^{2}\big)$, where $s:=\partial_{\theta}\log p_{\theta_0}$ is the score, so the best one-point design maximizes $\abs{s}$---and $\abs{s}$ is unbounded in both of our families: $s(x)=x-\theta_0$ for the normal location family, and $s(x)=1/\theta_0+1/(\theta_0+1)+\log x\to-\infty$ as $x\to0$ for the beta shape family. Concretely, the symmetric two-point design at $\theta_0\pm d$ in the normal family has asymptotic variance exactly $\sigma^{2}/(nd^{2})$, and we confirm this empirically out to $d=4$, where the error is sixteen times smaller than at $d=1$; correspondingly, one-point beta designs at $x_0=0.2$ down to $x_0=0.002$ have errors falling by a factor of $27$. The cause is the multiplicative form of the reporting model: the absolute error $\sigma p_{\theta_0}(x)$ vanishes wherever the density does, so a report made far out in the tail is treated as almost noiseless. A real expert asked for the plausibility of a value they consider impossible supplies no usable information at all. Remark~\ref{rem:floor} adds the missing ingredient---an additive error floor---and shows that it removes the degeneracy: with the floor in place the design problem has interior optima, and the design analysis above becomes usable advice rather than a cautionary tale.
\end{remark}

\begin{remark}[An additive floor removes the degeneracy]
\label{rem:floor}
Augment the reporting model with an additive error floor,
\[
z \;=\; p_{\theta_0}(x)\,(1+\sigma\xi)\;+\;\tau\eta,
\]
with $\tau>0$ and $\eta$ standard, independent of $x$ and $\xi$: a report about a value the expert considers implausible still carries error at least $\tau$. The conditional mean of $z$ is unchanged, so the estimator, its consistency and its asymptotic normality all carry over verbatim; only the conditional variance changes, to $\sigma^{2}p_{\theta_0}(x)^{2}+\tau^{2}$, and with it the design calculus. Under a design measure $q$ the asymptotic variance becomes
\[
V(q)\;=\;\frac{\E_q\big[(\sigma^{2}p_{\theta_0}^{2}+\tau^{2})\,\dot p_{\theta_0}^{2}\big]}{n\,\big(\E_q[\dot p_{\theta_0}^{2}]\big)^{2}},
\]
and under the i.i.d.-from-belief design it has the closed form $(\sigma^{2}A+\tau^{2}B)/(nB^{2})$ with $A$ and $B$ as in Section~\ref{sec:ls-vs-mle}: the floor enters as exactly $+\tau^{2}/(nB)$. Simulation confirms this within $3\%$ at $n=200$ for both families at $\tau\in\set{0.02,0.05}$, where the floor contributes between $2\%$ and $77\%$ of the total variance depending on the family and $\tau$.

The design problem now has interior solutions, because escaping to the tails sends $\dot p_{\theta_0}\to0$ while the numerator keeps its floor: a one-point design at $x_{0}$ has variance $(\sigma^{2}p_{\theta_0}(x_{0})^{2}+\tau^{2})/\big(n\,\dot p_{\theta_0}(x_{0})^{2}\big)\to\infty$ in the tails, instead of $\to0$. For the symmetric two-point design of the normal family, $V(d)=\big(\sigma^{2}\varphi(d)^{2}+\tau^{2}\big)/\big(n\,\varphi(d)^{2}d^{2}\big)$ is minimized at a finite $d^{\ast}$: at $\sigma=0.1$, $d^{\ast}=1.55$, $1.31$, $1.09$ for $\tau=0.01$, $0.02$, $0.05$---the optimum moves \emph{inward} as the floor grows, toward values the expert finds plausible. Empirical mean squared errors at $d^{\ast}$ match $V(d^{\ast})$ to within $7\%$ ($n=30$, $2{,}000$ replications), while the design at $d=4$ that was sixteen times better than $d=1$ without the floor is now roughly three orders of magnitude worse than $d^{\ast}$ (its predicted variance exceeds even what the bounded search interval allows the empirical error to express). The beta family behaves identically: the optimal one-point design sits at $x_{0}^{\ast}=0.115$, $0.152$, $0.210$ for the same $\tau$ values, and the tail design $x_{0}=0.002$ that was $27$ times better than $x_{0}=0.2$ without the floor is now about four orders of magnitude worse than the optimum. Finally, the comparison of Section~\ref{sec:ls-vs-mle} extends unchanged in form: the least-squares variance under the i.i.d.\ design is $(\sigma^{2}A+\tau^{2}B)/(nB^{2})$, so the region in which reported plausibilities beat the sample-only MLE is the ellipse $\set{(\sigma,\tau):\ \sigma^{2}A/B^{2}+\tau^{2}/B<1/I_F(\theta_0)}$, of which the crossover $\sigma^{\ast}$ is the $\tau=0$ section.
\end{remark}

In this section, we present our main analytical results. 
First, we will introduce our least squares based objective function. 
Next, we will consider the large sample behavior of the proposed estimator 
by evaluating the consistency of the estimator, and we will show the conditions under which we achieve asymptotic normality. Third, we present a finite sample result. 

\section{A Least-Squares Based Approach to Elicitation}

\subsection{Proposed Objective Function}\label{sec:objective}

Assume that we elicit i.i.d. observations $x_i$ with corresponding sample likelihoods $z_i$ for $i=1,\ldots,n$. Assuming we have a parametric model class, our proposed method of estimating $\theta$ involves minimizing an objective function, which we illustrate below.

Let $Q((\vec{x},\vec{z}),\theta)=\sum_i \left( l(x_i,\theta)-z_i \right)^2$

Our proposed optimization problem is $$
\hat{\theta} = \arg \min_{\theta} \sum_i \left( l(x_i,\theta)-z_i \right)^2 = \arg \min_{\theta} Q((\vec{x},\vec{z}), \theta),$$

where $x_i,z_i$ is the $i$th sample and likelihood.

$$\frac{dQ}{d\theta} = 2 \sum_i \left( (l(x_i,\theta)-z_i) \frac{\partial}{\partial \theta}l(x_i,\theta)\right) $$

and let $\psi((x_i,z_i),\theta)=\big(l(x_i,\theta)-z_i\big)\frac{\partial}{\partial \theta}l(x_i,\theta),$ so that $\frac{dQ}{d\theta}=2\sum_i \psi((x_i,z_i),\theta)$ and $\hat\theta$ solves $\sum_i \psi((x_i,z_i),\theta)=0$: the estimating function, not the raw residual.

$\hat{\theta}$ is a solution to $$\sum_i \left( (l(x_i,\theta)-z_i) \frac{\partial}{\partial \theta}l(x_i,\theta)\right) = 0$$

and

$\theta_0$ solves $$\E_{\theta_0}\left[(l(x_i,\theta)-z_i) \frac{\partial}{\partial \theta}l(x_i,\theta)\right]=0$$

\section{Asymptotic Analysis}

Let $\Omega$ be the parameter space with an open set $\omega$ such that $\theta_0$, the true parameter value, is an interior point.

\subsection{Consistency}
\label{sec:consistency}

We obtain consistency from the standard pair of conditions for M-estimation: that $\theta_{0}$ be well separated from the rest of $\Theta$ under the population criterion, and that the sample criterion converge to it uniformly. Recall $L_n(\theta)=\sum_{i=1}^{n}\ell_{X_i,Z_i}(\theta)$ with $\ell_{x,z}(\theta)=-(l_x(\theta)-z)^2$.

\begin{enumerate}
	\item[(A5)] \emph{(Well-separation.)} For each $\delta>0$,
	$$
	D(\delta) := \E\,\ell_{X,Z}(\theta_0)-\!\!\sup_{\theta\in\Theta:\,\abs{\theta-\theta_0}\ge\delta}\!\!\E\,\ell_{X,Z}(\theta)\ >\ 0 .
	$$
	\item[(A6)] \emph{(Uniform concentration.)} There exist $C_1,c_2\in(0,\infty)$, not depending on $n$, such that
	$$
	\Prob\Big(\sup_{\theta\in\Theta}\abs*{n^{-1}L_n(\theta)-\E\,\ell_{X,Z}(\theta)}\ \ge\ D(\delta)/2\Big)\ \le\ C_1 e^{-c_2 n}.
	$$
\end{enumerate}

Under (A5) and (A6) the estimator is consistent, $\hat{\theta} \xrightarrow{\mathcal{P}}\ \theta_0$; the argument given in Section~\ref{sec:remainder} in fact yields the stronger conclusion $\Prob(\abs{\hat\theta-\theta_0}>\delta)\le C_1e^{-c_2 n}$ for each $\delta>0$. Condition (A5) has a closed form and requires no compactness assumption, and (A6) is proved for the families used here in Section~\ref{sec:appendix-a6}; both are discussed in Remark~\ref{rem:concavity}.

One might instead hope to argue from monotonicity of $\theta\mapsto(l(x,\theta)-z)\,\partial_{\theta}l(x,\theta)$ together with continuity near $\theta_0$ and an isolated root there, as is common for estimating equations. That route is unavailable here: this function tends to $0$ as $\abs{\theta}\to\infty$ whenever $p_{\theta}(x)$ and $\partial_{\theta}p_{\theta}(x)$ do, and a monotone function with equal limits at $\pm\infty$ is constant, so it is monotone for no such family. The same obstruction rules out the concavity hypothesis used by \cite{Pinelis2017}, as discussed in Remark~\ref{rem:concavity}.

\subsection{Asymptotic Normality}

$$\frac{\partial}{\partial \theta} \psi((x,z),\theta) = \left[ \frac{\partial}{\partial \theta}l(x,\theta) \right]^2  + (l(x,\theta)-z) \frac{\partial^2}{\partial \theta^2}l(x,\theta)$$

If $$\E_{\theta_0} \left[ \left[ \frac{\partial}{\partial \theta}l(x,\theta) \right]^2  + (l(x,\theta)-z) \frac{\partial^2}{\partial \theta^2}l(x,\theta) \right] $$ is finite and nonzero and 

$$
\E_{\theta_0} \left[ \left\lbrace \left[ \frac{\partial}{\partial \theta}l(x,\theta) \right]^2  + (l(x,\theta)-z) \frac{\partial^2}{\partial \theta^2}l(x,\theta)\right\rbrace^2 \right] < \infty
$$,

then $$\sqrt{n}(\hat{\theta}-\theta_0) \xrightarrow{\mathcal{L}}\ \mathcal{N}(0,\sigma_{\hat{\theta}}^2)$$

where $$\sigma_{\hat{\theta}}^2 = \frac{\E_{\theta_0}[\psi^2((X,Z),\theta_0)]}{(\E_{\theta_0} [\frac{\partial}{\partial \theta} \psi((X,Z),\theta)|_{\theta=\theta_0}])^2}.$$

\subsection{When do the reported likelihoods help? A comparison with maximum likelihood}
\label{sec:ls-vs-mle}

The estimator above uses both the examples $x_i$ and the reported likelihoods $z_i$. A natural question is what the $z_i$ buy us relative to the obvious sample-only alternative: maximum likelihood on the examples alone. When the examples are drawn i.i.d.\ from the expert's belief $p_{\theta_0}$, the MLE $\hat{\theta}_{\mathrm{MLE}}=\arg\max_\theta \sum_i \log p_\theta(x_i)$ is Cram\'er--Rao efficient \emph{among all estimators that use only the samples}, with asymptotic variance $1/(nI_F(\theta_0))$, where $I_F(\theta_0)=\E_{\theta_0}[(\partial_\theta \log p_{\theta_0})^2]$ is the Fisher information. Comparing our estimator against this baseline therefore isolates the value of the reported likelihoods.

We adopt the noise model of Section \ref{sec:elicitation-experiments}, $z=p_{\theta_0}(x)(1+\sigma\xi)$ with $\E[\xi]=0$, $\E[\xi^2]=1$, and $\xi$ independent of $x$, where $\sigma$ measures the expert's unreliability. Writing $\dot p_\theta = \partial_\theta p_\theta$ and specializing $\sigma_{\hat\theta}^2$ to this model gives, at $\theta_0$,
\[
\psi((x,z),\theta_0) = \big(p_{\theta_0}(x)-z\big)\dot p_{\theta_0}(x) = -\sigma\xi\, p_{\theta_0}(x)\,\dot p_{\theta_0}(x),
\]
so that, using independence of $\xi$ and $x$,
\[
I_1(\theta_0)=\E[\psi^2]=\sigma^2 A,\quad A:=\E_{\theta_0}\!\big[p_{\theta_0}^2\,\dot p_{\theta_0}^2\big], \qquad
I_2(\theta_0)=\E[\partial_\theta\psi]=B,\quad B:=\E_{\theta_0}\!\big[\dot p_{\theta_0}^2\big],
\]
where the cross term in $I_2$ vanishes because $\E[(p_{\theta_0}-z)\mid x]=0$.

\begin{remark}[Normalization of $I_1$ and $I_2$]
Throughout we define $I_1(\theta_0):=\E[\psi^2((X,Z),\theta_0)]$ and
$I_2(\theta_0):=\E[\partial_\theta\psi((X,Z),\theta_0)]$ in terms of the
estimating function $\psi$. Section~\ref{sec:setting} states the same two
quantities in terms of the per-observation criterion
$\ell_{x,z}(\theta)=-(l_x(\theta)-z)^2$, for which
$\ell_{x,z}'=-2\psi$ and $\ell_{x,z}''=-2\partial_\theta\psi$; that convention
therefore yields $4I_1$ and $2I_2$. Only the ratio $I_2(\theta_0)^2/I_1(\theta_0)$
enters the standardization and the asymptotic variance, and it is identical under
both conventions, so no result depends on the choice.
\end{remark}

Hence the least-squares estimator has asymptotic variance
\[
\mathrm{Var}_{\mathrm{LS}}(\hat\theta)\ \sim\ \frac{\sigma^2 A}{nB^2},
\qquad\text{compared with}\qquad
\mathrm{Var}_{\mathrm{MLE}}(\hat\theta)\ \sim\ \frac{1}{nI_F(\theta_0)}.
\]
Two features stand out. First, $\mathrm{Var}_{\mathrm{LS}}\propto\sigma^2$: as the expert becomes reliable ($\sigma\to0$) the variance vanishes, recovering the exact-recovery phenomenon of the noise-free case, whereas the MLE variance is a fixed constant no matter how good the expert is---the samples alone cannot pin down a density they were merely drawn from beyond the parametric rate. Second, the two estimators cross at
\[
\boxed{\ \sigma^\ast = \frac{B}{\sqrt{A\,I_F(\theta_0)}}\ }
\]
Below $\sigma^\ast$ the reported likelihoods strictly improve on the best sample-only estimator; above it, a reliable expert's samples are worth more than noisy plausibility reports and one should fall back on maximum likelihood.

Figure \ref{fig:ls-vs-mle} evaluates this prediction. For the normal location family $\sigma^\ast\approx0.64$ and for the beta shape family $\sigma^\ast\approx0.66$: in both cases the least-squares estimator dominates maximum likelihood for expert noise up to roughly $60\%$ relative error in the reported likelihoods, a regime that comfortably covers a competent expert. Below $\sigma^\ast$ the empirical mean squared errors track the asymptotic curves to within $21\%$ (normal) and $10\%$ (beta); above $\sigma^\ast$ the least-squares curve rises faster than its asymptotic approximation at this sample size. The empirical crossover falls at $0.58$ for the normal family and $0.65$ for the beta, against predicted values of $0.64$ and $0.66$. We emphasize that this comparison is deliberately \emph{conservative} for our method: the i.i.d.-from-belief design is precisely where the sample-only baseline is strongest, because it makes the example locations themselves an efficient encoding of the density. In the diverse-sampling regime for which sample-based elicitation is actually motivated---where an expert deliberately supplies spread-out, representative examples rather than i.i.d.\ draws---the example locations no longer encode $p_{\theta_0}$, maximum likelihood on them is inconsistent for the belief, and the reported likelihoods become indispensable rather than merely helpful. The crossover $\sigma^\ast$ should therefore be read as a lower bound on the range of expert noise for which the proposed method is preferable.

\begin{figure}
	\centering
	\includegraphics[width=\linewidth]{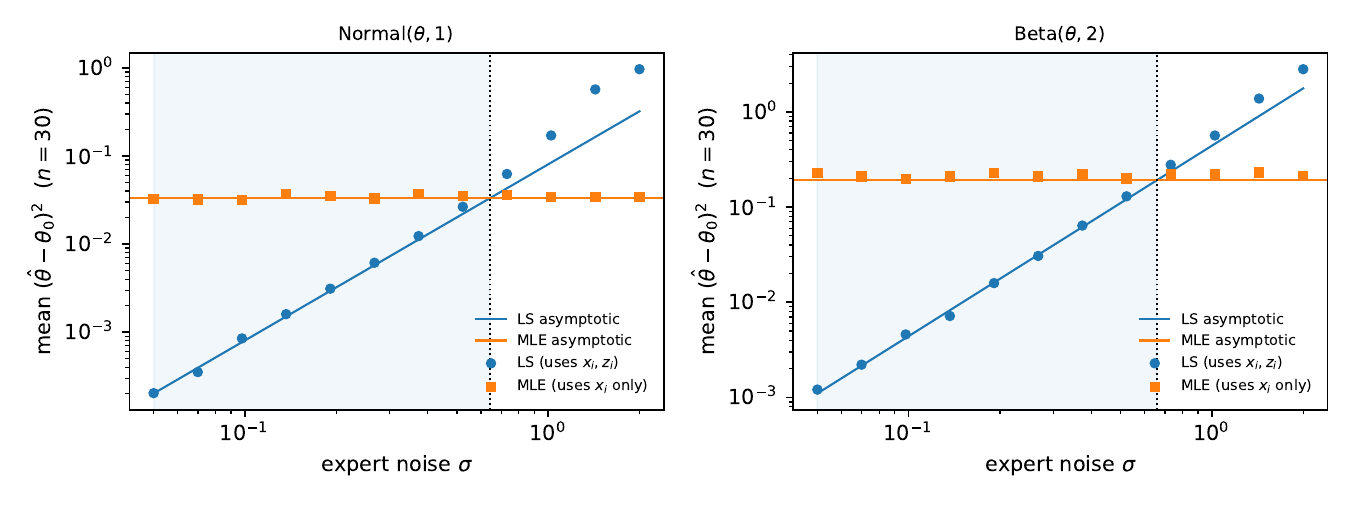}
	\caption[Least-squares elicitation versus maximum likelihood.]{Mean squared error of the least-squares elicitation estimator (using $x_i$ and the reported likelihoods $z_i$) and of the sample-only maximum-likelihood estimator (using $x_i$ alone), as a function of expert noise $\sigma$, at $n=30$ (log--log; markers are empirical MSE over $800$ replications, lines are the asymptotic variances $\sigma^2A/(nB^2)$ and $1/(nI_F)$). The two cross at the predicted $\sigma^\ast$ (dotted); in the shaded region the reported likelihoods strictly help. Left: $\mathcal{N}(\theta,1)$. Right: $\mathrm{Beta}(\theta,2)$.}
	\label{fig:ls-vs-mle}
\end{figure}

\subsection{Estimating the reporting scale}
\label{sec:scale}

The comparison above, like the rest of the chapter so far, takes the reported plausibilities to lie on the density's own scale. Following Remark~\ref{rem:scale}, we now drop that assumption: the expert reports
\[
z \;=\; c_{0}\,p_{\theta_{0}}(x)\,(1+\sigma\xi),
\]
with the reporting scale $c_{0}>0$ unknown, so that only \emph{relative} plausibility is assumed meaningful. This is the reporting format the elicitation literature considers realistic---the objection of \cite{Mikkola2023} to density elicitation is precisely that absolute density values are not available to introspection---so the results of this section remove the largest idealization in the noise model.

The fixed-scale estimator of Section~\ref{sec:objective} fails under this model, and not gracefully. Its population criterion becomes $\theta\mapsto\E\,\big(p_{\theta}(X)-c_{0}p_{\theta_{0}}(X)\big)^{2}$ up to an additive constant, and the minimizer generally sits away from $\theta_{0}$. For the beta shape family of Section~\ref{sec:elicitation-experiments} at $c_{0}=1.5$ the population minimizer is $3.495$ against $\theta_{0}=3$---a bias of twice the success tolerance used there, so the fit \emph{always} fails asymptotically---and at $c_{0}=0.5$ the minimizer is pinned to the boundary of $\Theta=[1.1,10]$. The symmetric normal location family is deceptively robust in one direction only: reflection symmetry keeps $\theta_{0}$ a critical point for every $c_{0}$, and for $c_{0}\ge1$ it remains the global minimizer, but at $c_{0}=0.5$ the population minimizer moves from $2$ to $0.67$, and by $c_{0}=0.3$ the criterion decreases all the way to the boundary: when the expert under-reports the scale, the fit can lower $\E\,p_{\theta}^{2}$ faster by moving the density away from the data than it loses on the match term, and the location estimate escapes. Empirically, at $c_{0}=1.5$ the mean squared error of the fixed-scale fit for the beta family plateaus at $0.248$ by $n=320$---the squared population bias---while the joint estimator below continues to decay at the $n^{-1}$ rate (fitted log--log slope $-1.11$ in both families; Figure~\ref{fig:ch5-scale}).

The remedy is to estimate the scale jointly,
\[
(\hat\theta,\hat c)\;=\;\arg\min_{\theta\in\Theta,\ c\in\mathcal{C}}\ \sum_{i=1}^{n}\big(c\,p_{\theta}(x_i)-z_i\big)^{2},
\]
with $\mathcal{C}=[c_{\mathrm{lo}},c_{\mathrm{hi}}]\subset(0,\infty)$ compact. The extension costs nothing computationally: for fixed $\theta$ the objective is linear least squares in $c$, so
\[
\hat c(\theta)\;=\;\frac{\sum_i z_i\,p_{\theta}(x_i)}{\sum_i p_{\theta}(x_i)^{2}},
\]
and the joint fit is the same one-dimensional grid-plus-refinement search as before, applied to the profiled objective. Identifiability is inherited from the family, under one support condition: assume every $p_{\theta}$, $\theta\in\Theta$, has the same support as $p_{\theta_{0}}$ (true of both families used here---all of $\mathbb{R}$ for the normal location family, $(0,1)$ for the beta shape family). The population criterion is $\E\big(c\,p_{\theta}-c_{0}p_{\theta_{0}}\big)^{2}$ plus a constant; it vanishes only if $c\,p_{\theta}=c_{0}p_{\theta_{0}}$ almost everywhere on that common support, and integrating both sides over it---each density integrating to one---forces $c=c_{0}$, hence $\theta=\theta_{0}$ by identifiability of the family. For well-separation, complete the square in $c$ at fixed $\theta$:
\[
\E\big(c\,p_{\theta}-c_{0}p_{\theta_{0}}\big)^{2}
\;=\;\E[p_{\theta}^{2}]\,\big(c-c^{*}(\theta)\big)^{2}
\;+\;c_{0}^{2}\,\E[p_{\theta_{0}}^{2}]\,\big(1-\rho(\theta)^{2}\big),
\qquad
\rho(\theta)\;=\;\frac{\E[p_{\theta}\,p_{\theta_{0}}]}{\sqrt{\E[p_{\theta}^{2}]\;\E[p_{\theta_{0}}^{2}]}},
\]
with $c^{*}(\theta)=c_{0}\,\E[p_{\theta}p_{\theta_{0}}]/\E[p_{\theta}^{2}]$. Condition (A5) for the pair then follows from one ingredient per term. Pairs whose $\theta$-coordinate is $\delta$-far from $\theta_{0}$ are separated by the second term whenever $\sup_{\abs{\theta-\theta_{0}}\ge\delta}\rho(\theta)<1$---by the equality case of the Cauchy--Schwarz inequality this is a condition on the family alone, and it fails only if some far $p_{\theta}$ is proportional to $p_{\theta_{0}}$. Pairs whose $\theta$-coordinate is close to $\theta_{0}$ but whose scale is not are separated by the first term: $\theta\mapsto\E[p_{\theta}^{2}]$ and $\theta\mapsto c^{*}(\theta)$ are continuous at $\theta_{0}$ with $\E[p_{\theta_{0}}^{2}]>0$ and $c^{*}(\theta_{0})=c_{0}$, so on a small enough $\theta$-neighborhood the first term is bounded below by a positive multiple of $(c-c_{0})^{2}$. Both ingredients hold for the families used here. The uniform-concentration condition (A6) extends to the rectangle $\Theta\times\mathcal{C}$ with the same exponential rate; the covering argument is unchanged except that the net has $N_{1}N_{2}$ points (Remark~\ref{rem:a6-scale}), and consistency of $(\hat\theta,\hat c)$ follows exactly as in Section~\ref{sec:consistency}.

For the asymptotic distribution, write $\beta=(\theta,c)$, $m_{\beta}(x)=c\,p_{\theta}(x)$ and $\nabla m_{\beta}=(c\,\dot p_{\theta},\,p_{\theta})^{\top}$; the estimating equation is $\sum_i\big(m_{\beta}(x_i)-z_i\big)\nabla m_{\beta}(x_i)=0$, and since $\E[z\mid x]=c_{0}p_{\theta_{0}}(x)$, the usual sandwich argument gives, under the two-parameter analogues of the smoothness and moment conditions of Section~\ref{sec:setting},
\[
\sqrt{n}\,\big(\hat\beta-\beta_{0}\big)\ \xrightarrow{\mathcal{L}}\ \mathcal{N}\big(0,\ \sigma^{2}c_{0}^{2}\,G^{-1}WG^{-1}\big),
\qquad
G:=\E\big[\nabla m\,\nabla m^{\top}\big],\quad
W:=\E\big[p_{\theta_{0}}^{2}\,\nabla m\,\nabla m^{\top}\big],
\]
with $\nabla m$ evaluated at $\beta_{0}$ and all expectations under $X\sim p_{\theta_{0}}$; the extra factor $p_{\theta_{0}}^{2}$ in $W$ is the heteroscedasticity of the reports, $\mathrm{Var}(z\mid x)=\sigma^{2}c_{0}^{2}\,p_{\theta_{0}}(x)^{2}$. Writing $a:=\E[p_{\theta_{0}}^{3}\dot p_{\theta_{0}}]$, $b:=\E[p_{\theta_{0}}\dot p_{\theta_{0}}]$, $m_{2}:=\E[p_{\theta_{0}}^{2}]$, $m_{4}:=\E[p_{\theta_{0}}^{4}]$ and $A$, $B$ as in Section~\ref{sec:ls-vs-mle}, the $\theta$-coordinate of the sandwich is, explicitly,
\begin{equation}\label{eq:vtheta-scale}
V_{\theta}\;=\;\sigma^{2}\,\frac{A\,m_{2}^{2}-2ab\,m_{2}+b^{2}m_{4}}{\big(B\,m_{2}-b^{2}\big)^{2}} .
\end{equation}
Three consequences of \eqref{eq:vtheta-scale} deserve notice. First, $V_{\theta}$ does not depend on $c_{0}$: rescaling the reports rescales $\hat c$ and nothing else, so the precision of $\hat\theta$ is invariant to the units the expert happens to use. Second, for \emph{any} location family whose density vanishes in the tails, $b=-\int f^{2}f'=0$ and $a=-\int f^{4}f'=0$, so \eqref{eq:vtheta-scale} collapses to $V_{\theta}=\sigma^{2}A/B^{2}$: the known-scale asymptotic variance of Section~\ref{sec:ls-vs-mle}, exactly. Estimating the reporting scale is asymptotically free for location families. Third, the ratio $R:=V_{\theta}B^{2}/(\sigma^{2}A)$ need not exceed one: for $\mathrm{Beta}(\theta,2)$ at $\theta_{0}=3$ it is $R=0.958$, so the scale-free estimator is asymptotically \emph{more} precise than the fixed-scale fit it replaces. There is no contradiction---least squares is not efficient under the heteroscedastic noise above, and the reports are noisiest in the direction of $p_{\theta_{0}}$ itself; the scale coordinate absorbs part of that component of the noise instead of letting it contaminate $\hat\theta$. Repeating the comparison of Section~\ref{sec:ls-vs-mle} with $V_{\theta}$ in place of $\sigma^{2}A/B^{2}$ moves the crossover for the beta family from $\sigma^{\ast}=0.660$ to $0.674$ and leaves the normal family's $0.643$ unchanged: dropping the absolute-scale assumption does not shrink the regime in which reported plausibilities help.

Simulation confirms the sandwich. At $n=200$ over $2{,}000$ replications with $\sigma=0.1$ and $c_{0}\in\set{0.5,1,2}$, the empirical values of $n\,\mathrm{Var}(\hat\theta)$ and $n\,\mathrm{Var}(\hat c)$ agree with the corresponding diagonal entries of $\sigma^{2}c_{0}^{2}G^{-1}WG^{-1}$ to within $7\%$ in every configuration and both families, the empirical $\theta$-variances are indistinguishable across the three values of $c_{0}$ as \eqref{eq:vtheta-scale} requires, and the errors standardized by the sandwich have mean at most $0.05$ and standard deviation within $0.04$ of $1$. The beta-family ratio $R$ is confirmed directly: over $8{,}000$ replications at $c_{0}=1$ the joint estimator's variance is $0.962$ times the fixed-scale estimator's, against the asymptotic $0.958$.

\begin{figure}
	\centering
	\includegraphics[width=\linewidth]{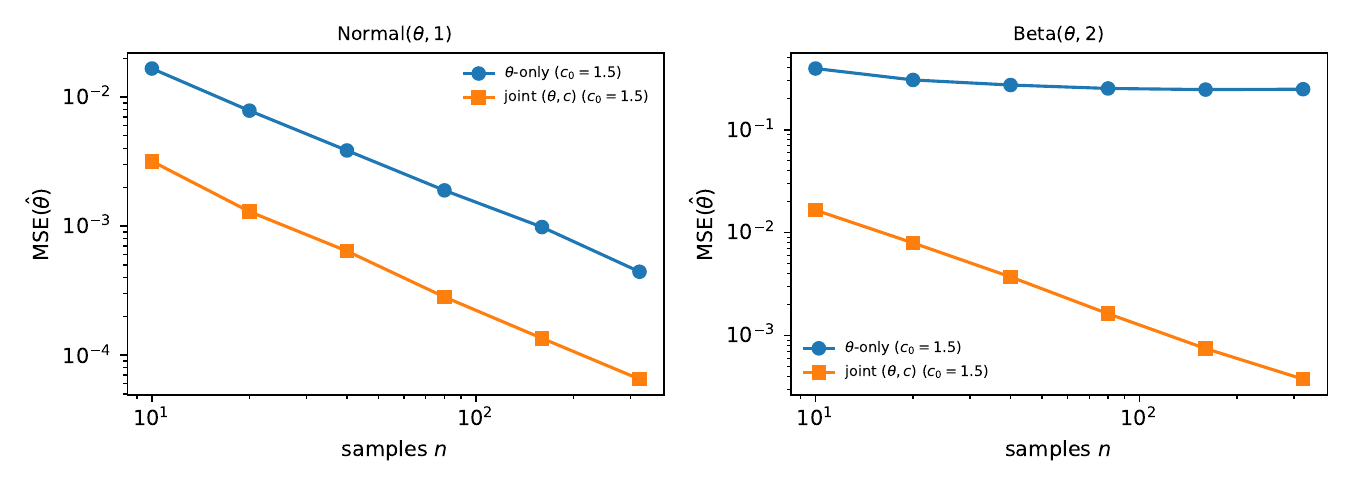}
	\caption[Joint estimation of the parameter and the reporting scale.]{Mean squared error of $\hat\theta$ versus samples per batch when the expert reports on a misspecified scale ($c_{0}=1.5$, $\sigma=0.1$, log--log, $400$ replications per point). The fixed-scale estimator (circles) plateaus at its squared population bias; the joint estimator of Section~\ref{sec:scale} (squares) continues at the $n^{-1}$ rate. Left: $\mathcal{N}(\theta,1)$, where the plateau is invisible at this $c_{0}$ because symmetry keeps the fixed-scale fit consistent for $c_{0}\ge1$. Right: $\mathrm{Beta}(\theta,2)$, where the plateau equals the squared population bias $0.495^{2}\approx0.245$.}
	\label{fig:ch5-scale}
\end{figure}

\subsection{Finite Sample}

To do an elicitation, we will need to obtain a finite number of samples from an expert. While the large sample results give confidence in the general tractability of the problem, the finite sample results are important to understanding the realistic feasibility of implementing an automated elicitation tool.  

For the finite sample result, the assumptions are the following \cite{Pinelis2017}:

Let $\theta_{0}\in \Theta$ be the expert's target value of the parameter $\theta$, such that
\begin{center}
	$[\theta_{0}-\delta,\ \theta_{0}+\delta]\subseteq \theta^{\circ}$ 
\end{center}
for some real $\delta>0$, where $\theta^{\circ}$ denotes the interior of the subset $\Theta$ of $\mathbb{R}$. 

For brevity, we will use $\mathrm{P}$ and $\mathrm{E}$ throughout defined as $\mathrm{P} :=\mathrm{P}_{\theta_{0}}$ and $\mathrm{E} :=\mathrm{E}_{\theta_{0}}.$

For $x\in \mathcal{X}$, $z\in\mathcal{Z}$ and $\theta\in \Theta$, consider the
\emph{per-observation} criterion
$$
\ell_{x,z}(\theta)=- (l_x(\theta)-z)^2 ,
$$
and write $L_n(\theta):=\sum_{i=1}^{n}\ell_{X_i,Z_i}(\theta)$ for the sample
criterion that $\hat\theta$ maximizes. The assumptions below constrain the
per-observation criterion $\ell_{X,Z}$, so that $I_1(\theta_0)$ and
$I_2(\theta_0)$ are fixed $O(1)$ constants; cf.\ the remark on the
normalization of $I_1$ and $I_2$ in Section~\ref{sec:ls-vs-mle}.

\begin{enumerate}
	\item The set $\mathcal{X}_{>0} :=\{x\in \mathcal{X}:p_{\theta}(x)>0\}$ is the same for all $\theta\in [\theta_{0}-\delta,\ \theta_{0}+\delta],$ and for each $x\in \mathcal{X}_{>0}$ the likelihood $l_{x}(\theta)$ are thrice differentiable in $\theta$ at each point $\theta\in [\theta_{0}-\delta,\ \theta_{0}+\delta].$
	\item $\mathrm{E}\ell_{X,Z}^{'}(\theta_{0})^{2}=I_1(\theta_0)$ and $-\mathrm{E}\ell_{X,Z}^{''}(\theta_{0})=I_2(\theta_{0})\in (0,\infty)$.
	\item $\mathrm{E}|\ell_{X,Z}'(\theta_{0})|^{3}+\mathrm{E}|\ell_{X,Z}^{''}(\theta_{0})|^{3}<\infty.$
	\item $\mathrm{E} \sup |\ell_{X,Z}^{'''}(\theta)|^{3}<\infty.$
	$$
	\theta\in [\theta_0-\delta,\theta_0+\delta]
	$$
\end{enumerate}

Suppose that the above conditions hold, together with the well-separation and uniform-concentration conditions (A5) and (A6) of Section~\ref{sec:consistency}.

Then

$$|\mathrm{P}\left(\sqrt{n \frac{I_2(\theta_{0})^2}{I_1(\theta_0)}}(\hat{\theta}-\theta_{0})\leq z\right)-\Phi(z)|\leq\frac{C}{\sqrt{n}}$$

for all real $z$, and

$$|\mathrm{P}(\sqrt{n\frac{I_2(\theta_{0})^2}{I_1(\theta_0)}}(\hat{\theta}-\theta_{0})\leq z)-\Phi(z)|\leq\frac{C_{\omega}}{z^3 \sqrt{n}}$$

for $z\in (0,\omega \sqrt{n}]$ for any $\omega\in (0,\infty)$. $C_{\omega}$ is a finite expression that depends on $\omega$ and neither $C$ and $C_{\omega}$ depend on $n$ or $z$.

\subsection{Multivariate parameters}
\label{sec:multivariate}

The restriction to scalar $\theta$ is expository, and the elicitation problem one actually faces is multivariate: a prior has at least a location and a scale. Let now $\Theta\subseteq\mathbb{R}^{k}$ with $\theta_{0}$ an interior point, and let $\hat\theta$ minimize the least-squares criterion over $\Theta$. To cover the reporting-scale estimator of Section~\ref{sec:scale} at the same time we state the results for a general smooth mean function $m_{\theta}(x)$ with $\E[z\mid x]=m_{\theta_{0}}(x)$ and $\mathrm{Var}(z\mid x)=\sigma^{2}m_{\theta_{0}}(x)^{2}$; the elicitation estimator is the case $m_{\theta}=p_{\theta}$, and the pair $(\theta,c)$ of Section~\ref{sec:scale} is the case $m_{(\theta,c)}=c\,p_{\theta}$ with parameter dimension $k+1$. The estimating function is
\[
\psi\big((x,z),\theta\big)\;=\;\big(m_{\theta}(x)-z\big)\,\nabla m_{\theta}(x)\ \in\ \mathbb{R}^{k},
\]
and $\hat\theta$ solves $\sum_{i}\psi((x_{i},z_{i}),\theta)=0$. We assume the multivariate analogues of the conditions of Section~\ref{sec:setting}:
\begin{enumerate}
	\item[(M1)] $m_{\theta}(x)$ is thrice continuously differentiable in $\theta$ on a ball $\bar B(\theta_{0},\delta)\subseteq\Theta^{\circ}$, for each $x$;
	\item[(M2)] $I_{2}:=\E\big[\nabla m_{\theta_{0}}\nabla m_{\theta_{0}}^{\top}\big]$ is nonsingular, and $I_{1}:=\E\big[\psi\psi^{\top}\big]=\sigma^{2}\,\E\big[m_{\theta_{0}}^{2}\,\nabla m_{\theta_{0}}\nabla m_{\theta_{0}}^{\top}\big]$ is finite;
	\item[(M3)] $\E\norm{\psi(\theta_{0})}^{3}+\E\norm{\nabla_{\theta}\psi(\theta_{0})}^{3}<\infty$;
	\item[(M4)] $\E\sup_{\theta\in\bar B(\theta_{0},\delta)}\norm{\nabla^{2}_{\theta}\psi(\theta)}^{3}<\infty$,
\end{enumerate}
together with (A5) and (A6), which are stated in terms of $\abs{\theta-\theta_{0}}$ and a supremum over $\Theta$ and hence make sense verbatim with the Euclidean norm; on a bounded box the covering proof of Section~\ref{sec:appendix-a6} goes through with a $\prod_{j}N_{j}$-point net (Remark~\ref{rem:a6-scale}), and the argument of Section~\ref{sec:remainder} again yields exponential consistency, $\Prob(\abs{\hat\theta-\theta_{0}}>\delta)\le C_{1}e^{-c_{2}n}$. In (M2) the Hessian cross term $\E[(m_{\theta_{0}}-z)\nabla^{2}m_{\theta_{0}}]$ vanishes because $\E[z\mid x]=m_{\theta_{0}}(x)$, which is why $I_{2}$ is the outer-product matrix rather than a difference of two terms.

Under these conditions the classical sandwich argument gives
\[
\sqrt{n}\,\big(\hat\theta-\theta_{0}\big)\ \xrightarrow{\mathcal{L}}\ \mathcal{N}\big(0,\ V\big),
\qquad V=I_{2}^{-1}I_{1}I_{2}^{-1},
\]
of which the $2\times2$ display of Section~\ref{sec:scale} is an instance. The finite-sample question is the multivariate analogue of the Berry--Esseen bound above, and here the univariate proof does not transfer: the bracketing of Section~\ref{sec:remainder} solves the quadratic Taylor equation for the scalar $\hat\theta$, and there is no quadratic formula in $\mathbb{R}^{k}$. Nor does the obvious repair work. Linearizing, $\hat\theta-\theta_{0}=-\bar U^{-1}\bar\psi+\rho_{n}$ with $\bar\psi:=n^{-1}\sum_{i}\psi_{i}(\theta_{0})$ and $\bar U:=n^{-1}\sum_{i}\nabla_{\theta}\psi_{i}(\theta_{0})$, the quadratic remainder $\rho_{n}$ is itself of order $n^{-1}$, hence of order $n^{-1/2}$ on the standardized scale---exactly the accuracy at stake. This borderline term is what the univariate bracketing absorbs so carefully, and in $\mathbb{R}^{k}$ we absorb it instead into a higher-order smooth proxy, at the price of two more derivatives. Assume, in place of (M1) and (M4),
\begin{enumerate}
	\item[(M1$'$)] $m_{\theta}(x)$ is five times continuously differentiable in $\theta$ on $\bar B(\theta_{0},\delta)$;
	\item[(M4$'$)] $\E\norm{\nabla^{j}\psi(\theta_{0})}^{3}<\infty$ for $j\le3$, and $\E\sup_{\bar B(\theta_{0},\delta)}\norm{\nabla^{4}\psi}^{3}<\infty$.
\end{enumerate}
Expanding the estimating equation to third order around $\theta_{0}$,
\[
0\;=\;\bar\psi+\bar U\,d+\tfrac12\,\bar W_{2}[d,d]+\tfrac16\,\bar W_{3}[d,d,d]+r_{4},
\qquad d:=\hat\theta-\theta_{0},\quad \norm{r_{4}}\le\tfrac1{24}\,\bar R^{*}\abs{d}^{4},
\]
where $\bar W_{j}$ is the averaged $j$-th derivative tensor of $\psi$ at $\theta_{0}$ and $\bar R^{*}:=n^{-1}\sum_{i}\sup_{\bar B(\theta_{0},\delta)}\norm{\nabla^{4}\psi_{i}}$. Dropping $r_{4}$ leaves a polynomial system in $d$ whose coefficients are the sample means of the i.i.d.\ arrays $V_{i}:=\big(\psi_{i},\nabla\psi_{i},\nabla^{2}\psi_{i},\nabla^{3}\psi_{i}\big)(\theta_{0})$. Write $v^{*}:=\E V_{1}=\big(0,\,I_{2},\,\E\nabla^{2}\psi_{1},\,\E\nabla^{3}\psi_{1}\big)$, let $\lambda_{0}:=\lambda_{\min}(I_{2})>0$, and for $v=(s,U,W_{2},W_{3})$ and $t\in\mathbb{R}^{k}$ define the polynomial map
\[
F(t;v)\;:=\;s+U\,t+\tfrac12\,W_{2}[t,t]+\tfrac16\,W_{3}[t,t,t],
\]
so that the expansion above reads $F(d;\bar V)=-r_{4}$, while $F(0;\bar V)=\bar\psi$. The argument runs through four lemmas: an implicit solution map for the polynomial system, with a quantitative injectivity estimate (Lemma~\ref{lem:mv-ift}); the delta-method bound of Pinelis and Molzon applied to that map (Lemma~\ref{lem:mv-proxy}); a good event of probability $1-O(n^{-3/2})$ (Lemma~\ref{lem:mv-events}); and the conversion of the Taylor defect $r_{4}$ into a distance between $\hat\theta$ and the proxy (Lemma~\ref{lem:mv-defect}). Theorem~\ref{thm:mv-be} assembles them.

\begin{lemma}[Implicit solution map]\label{lem:mv-ift}
There exist $\eps_{0}>0$, $\rho_{0}\in(0,1]$ and a map $H:\bar B(v^{*},\eps_{0})\to\mathbb{R}^{k}$, infinitely differentiable on a neighborhood of $\bar B(v^{*},\eps_{0})$, such that $H(v^{*})=0$, $F(H(v);v)=0$ and $\abs{H(v)}\le\rho_{0}/2$ for all $v\in\bar B(v^{*},\eps_{0})$, and
\begin{equation}\label{eq:mv-jac}
\norm*{\partial_{t}F(t;v)-I_{2}}\;\le\;\tfrac{\lambda_{0}}{2}
\qquad\text{for all }\ \abs{t}\le\rho_{0},\ \norm{v-v^{*}}\le\eps_{0}.
\end{equation}
Consequently, for every such $v$ and all $t_{1},t_{2}\in\bar B(0,\rho_{0})$,
\begin{equation}\label{eq:mv-inj}
\abs*{F(t_{1};v)-F(t_{2};v)}\;\ge\;\tfrac{\lambda_{0}}{2}\,\abs{t_{1}-t_{2}} .
\end{equation}
The differential of $H$ at $v^{*}$ acts on the $\psi$-block as $-I_{2}^{-1}$ and annihilates the remaining blocks.
\end{lemma}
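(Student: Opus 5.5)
The plan is to apply the implicit function theorem to the polynomial map $F(t;v)$ at the base point $(t,v)=(0,v^{*})$, and then to shrink neighborhoods until the Jacobian estimate \eqref{eq:mv-jac} holds uniformly.

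First, the base point. Since $v^{*}=(0,I_{2},\E\nabla^{2}\psi_{1},\E\nabla^{3}\psi_{1})$, we have $F(0;v^{*})=0$. The partial Jacobian is
\[
\partial_{t}F(t;v)=U+W_{2}[t,\cdot]+\tfrac12 W_{3}[t,t,\cdot].
\]
At $(0,v^{*})$ this equals $I_{2}$, which is nonsingular by (M2). $F$ is a polynomial in $(t,v)$ jointly, hence real-analytic and in particular $C^{\infty}$ on all of $\mathbb{R}^{k}\times\mathcal{V}$, where $\mathcal{V}$ is the finite-dimensional space of coefficient arrays.

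Next, the Jacobian bound \eqref{eq:mv-jac}. We have
\[
\partial_{t}F(t;v)-I_{2}=(U-I_{2})+W_{2}[t,\cdot]+\tfrac12W_{3}[t,t,\cdot].
\]
Its operator norm is at most
\[
\norm{v-v^{*}}+(\norm{W_{2}^{*}}+\eps)\abs{t}+\tfrac12(\norm{W_{3}^{*}}+\eps)\abs{t}^{2}.
\]
So we first choose $\rho_{0}\in(0,1]$ small enough that the $t$-terms, with $\eps\le1$, are at most $\lambda_{0}/4$. We then choose $\eps_{0}\le\min(1,\lambda_{0}/4)$, which gives \eqref{eq:mv-jac}.

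The injectivity estimate \eqref{eq:mv-inj} follows from \eqref{eq:mv-jac} by the mean value inequality along the segment from $t_{2}$ to $t_{1}$, which stays in the convex ball $\bar B(0,\rho_{0})$. Concretely,
\[
F(t_{1};v)-F(t_{2};v)=I_{2}(t_{1}-t_{2})+\int_{0}^{1}\big(\partial_{t}F(t_{2}+s(t_{1}-t_{2});v)-I_{2}\big)(t_{1}-t_{2})\,ds.
\]
The first term has norm at least $\lambda_{0}\abs{t_{1}-t_{2}}$ because $I_{2}$ is symmetric positive definite; the integral has norm at most $\tfrac{\lambda_{0}}{2}\abs{t_{1}-t_{2}}$. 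Subtracting gives the claimed $\tfrac{\lambda_0}{2}\abs{t_1-t_2}$.

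Existence of $H$ with the smallness bound is the step needing some care, since the implicit function theorem alone gives only a local neighborhood whose size we must relate to $\rho_{0}$. I would build $H$ directly by a contraction (Newton–Kantorovich) argument. For fixed $v$, consider
\[
T_{v}(t)=t-I_{2}^{-1}F(t;v)\quad\text{on }\bar B(0,\rho_{0}/2).
\]
By \eqref{eq:mv-jac}, $\norm{\partial_{t}T_{v}}\le\norm{I_{2}^{-1}}\lambda_{0}/2=1/2$, so $T_{v}$ is a contraction. Moreover
\[
\abs{T_{v}(0)}=\abs{I_{2}^{-1}F(0;v)}\le\lambda_{0}^{-1}\abs{s}\le\lambda_{0}^{-1}\eps_{0},
\]
so after shrinking $\eps_{0}\le\lambda_{0}\rho_{0}/4$ the map sends the ball into itself. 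Banach's fixed point theorem then gives a unique zero $H(v)$ with $\abs{H(v)}\le\rho_{0}/2$. Uniqueness in the ball (from \eqref{eq:mv-inj}) together with the classical implicit function theorem at each point $(H(v),v)$—where $\partial_tF$ is invertible by \eqref{eq:mv-jac}—shows that $H$ agrees locally with the $C^{\infty}$ implicit solution. Hence $H$ is $C^{\infty}$, and in fact on a slightly larger open ball, obtained by running the same argument with $\eps_{0}$ replaced by a marginally bigger radius. $H(v^{*})=0$ by uniqueness.

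Finally, differentiating $F(H(v);v)=0$ at $v^{*}$ gives
\[
I_{2}\,DH(v^{*})+\partial_{v}F(0;v^{*})=0.
\]
Since $F(0;v)=s$, the derivative $\partial_{v}F(0;v^{*})$ is the projection onto the $\psi$-block $s$ and vanishes on the $U$, $W_{2}$ and $W_{3}$ blocks. Hence $DH(v^{*})$ acts as $-I_{2}^{-1}$ on the $\psi$-block and annihilates the other blocks.
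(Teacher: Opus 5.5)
Your proof is correct and shares the paper's skeleton: the implicit function theorem at $(0,v^{*})$, the mean value inequality for the injectivity estimate, and differentiation of $F(H(v);v)=0$ at $v^{*}$ using $F(0;v)=s$. The difference is how you construct $H$.

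The paper applies the implicit function theorem once at $(0,v^{*})$ and obtains \eqref{eq:mv-jac} by continuity of $\partial_{t}F$. It then shrinks $\eps_{0}$ further, using continuity of $H$ at $v^{*}$, until $\sup_{\bar B(v^{*},\eps_{0})}\abs{H}\le\rho_{0}/2$. Since \eqref{eq:mv-jac} only gets easier as $\eps_{0}$ decreases, this after-the-fact shrinking costs nothing. So the difficulty you flag, relating the size of the IFT neighborhood to $\rho_{0}$, does not actually arise.

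You instead build $H(v)$ on the whole ball as the fixed point of $T_{v}(t)=t-I_{2}^{-1}F(t;v)$ on $\bar B(0,\rho_{0}/2)$. Your checks of the contraction constant $\norm{I_{2}^{-1}}\lambda_{0}/2=1/2$ and of the self-map condition $\eps_{0}\le\lambda_{0}\rho_{0}/4$ are both right. You then use the implicit function theorem only locally at each $(H(v),v)$, with \eqref{eq:mv-inj} forcing the local smooth solution to coincide with $H$. Enlarging the radius slightly gives smoothness on a neighborhood of the closed ball.

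The paper's route is shorter. Yours is quantitative: $\rho_{0}$ is explicit in $\lambda_{0}$, $\norm{W_{2}^{*}}$ and $\norm{W_{3}^{*}}$, and $\eps_{0}=\min\set{1,\lambda_{0}\rho_{0}/4}$. The paper's $\eps_{0}$ depends on an unquantified IFT neighborhood and on a modulus of continuity of $H$. Because $\eps_{0}$ and $\rho_{0}$ feed into $\delta_{1}$ and the probability bound of Lemma~\ref{lem:mv-events}, and through them into $C_{a}$ in Theorem~\ref{thm:mv-be}, your construction makes the dependence of those constants on the model more explicit. You also get uniqueness of $H(v)$ in the ball directly from Banach's theorem, whereas the paper gets it implicitly from \eqref{eq:mv-inj}.

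One small point to state explicitly: the bound $\norm{U-I_{2}}\le\norm{v-v^{*}}$, and its analogues for $W_{2}$ and $W_{3}$, presumes a norm on the coefficient space that dominates the operator norm of each block. The Euclidean norm of the flattened array is one such choice.
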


\begin{proof}
$F$ is polynomial in $(t,v)$, $F(0;v^{*})=\E\psi(\theta_{0})=0$, and $\partial_{t}F(0;v^{*})=I_{2}$ is nonsingular by (M2), so the implicit function theorem yields a $C^{\infty}$ solution map $H$ on a neighborhood of $v^{*}$ with $H(v^{*})=0$ and $\nabla_{v}H(v^{*})=-\big(\partial_{t}F(0;v^{*})\big)^{-1}\partial_{v}F(0;v^{*})$; since $\partial_{v}F(0;v^{*})$ maps a direction $(s,U,W_{2},W_{3})$ to $s$, this differential is $-I_{2}^{-1}$ on the $\psi$-block and zero on the others. The map $(t,v)\mapsto\partial_{t}F(t;v)=U+W_{2}[t,\cdot\,]+\tfrac12 W_{3}[t,t,\cdot\,]$ is continuous and equals $I_{2}$ at $(0,v^{*})$, so \eqref{eq:mv-jac} holds after shrinking; shrink once more, using continuity of $H$ at $v^{*}$, so that $\sup_{\bar B(v^{*},\eps_{0})}\abs{H}\le\rho_{0}/2$. For \eqref{eq:mv-inj}, apply the mean value inequality to $t\mapsto F(t;v)-I_{2}t$ on the convex set $\bar B(0,\rho_{0})$: by \eqref{eq:mv-jac} its differential has norm at most $\lambda_{0}/2$ there, whence
$\abs{F(t_{1};v)-F(t_{2};v)}\ge\abs{I_{2}(t_{1}-t_{2})}-\tfrac{\lambda_{0}}{2}\abs{t_{1}-t_{2}}\ge\tfrac{\lambda_{0}}{2}\abs{t_{1}-t_{2}}$.
\end{proof}

\begin{lemma}[Optimal-order bound for the proxy]\label{lem:mv-proxy}
Fix a unit vector $a\in\mathbb{R}^{k}$ with $a^{\top}Va>0$, and define $f(v):=a^{\top}H(v)$ for $v\in\bar B(v^{*},\eps_{0})$ and $f(v):=0$ otherwise. Then there is a $C_{a}'<\infty$, not depending on $n$, with
\[
\sup_{z\in\mathbb{R}}\ \abs*{\Prob\paren*{\sqrt{\frac{n}{a^{\top}Va}}\;f(\bar V)\le z}-\Phi(z)}\ \le\ \frac{C_{a}'}{\sqrt{n}} .
\]
\end{lemma}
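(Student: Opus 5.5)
The plan is to invoke the delta-method Berry--Esseen theorem of Pinelis and Molzon for smooth nonlinear functions of sample means. That theorem treats $f(\bar V)$ with $\bar V=n^{-1}\sum_i V_i$ i.i.d. Suppose $\E\norm{V_1}^{3}<\infty$ and $f$ is continuously differentiable with a Lipschitz (or merely bounded) second derivative on a ball $\bar B(v^{*},\eps)$, with $f(v^{*})=0$ and $L:=\nabla f(v^{*})$ satisfying $\sigma_f^{2}:=\mathrm{Var}(L V_1)>0$. Then the uniform distance between the law of $\sqrt n f(\bar V)/\sigma_f$ and $\Phi$ is at most $C/\sqrt n$, and $f$ may be arbitrary (measurable) off the ball. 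The constant depends only on $\eps$, the third moments, $\sigma_f$, and bounds on $\nabla f$ and $\nabla^2 f$ on the ball. So the proof reduces to checking these hypotheses for $f=a^{\top}H$ extended by $0$.

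\textbf{Moments.} The components of $V_1$ are $\psi_1,\nabla\psi_1,\nabla^2\psi_1,\nabla^3\psi_1$ evaluated at $\theta_0$. Their third absolute moments are finite by (M3) for $j\le1$ and by (M4$'$) for $j=2,3$. Hence $\E\norm{V_1-v^{*}}^{3}<\infty$.

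\textbf{Smoothness.} By Lemma~\ref{lem:mv-ift}, $H$ is $C^\infty$ on an open neighborhood of the compact ball $\bar B(v^{*},\eps_0)$. Therefore $\nabla f$ and $\nabla^2 f$ are bounded on the ball, and $\nabla^2 f$ is Lipschitz there. Also $f(v^{*})=a^{\top}H(v^{*})=0$. Measurability of the zero extension is immediate.

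\textbf{Linear term.} The last statement of Lemma~\ref{lem:mv-ift} gives $\nabla f(v^{*})[\Delta]=-a^{\top}I_2^{-1}\Delta_s$, where $\Delta_s$ is the $\psi$-block of $\Delta$. Hence
\[
L(V_1-v^{*})=-a^{\top}I_2^{-1}\psi_1(\theta_0),
\qquad
\mathrm{Var}\big(L V_1\big)=a^{\top}I_2^{-1}I_1I_2^{-1}a=a^{\top}Va .
\]
This is positive by hypothesis, so the normalization in the statement is exactly the delta-method variance. The bound then follows with $C_a'$ depending on $a$, $\eps_0$, the moment bounds and the derivative bounds of $H$, and not on $n$.

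\textbf{Main obstacle.} The delicate point is the exact form of the Pinelis--Molzon hypotheses. The need is to confirm that their theorem allows $f$ to be smooth only on a ball and arbitrary outside, and that it tolerates the mean-zero $\psi$-block together with non-centered other blocks. Both are fine after centering $V_i-v^{*}$.

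If their version instead requires global smoothness, the fallback has two parts.
\begin{itemize}
\item Replace the zero extension by a smooth cutoff $\chi$ supported in $\bar B(v^{*},\eps_0)$ and equal to $1$ on $\bar B(v^{*},\eps_0/2)$. This gives a globally $C^2$ function with bounded derivatives.
\item Show the two functions differ only on $\{\norm{\bar V-v^{*}}>\eps_0/2\}$. By Markov's inequality with third moments (Rosenthal for the centered sum), this event has probability $O(n^{-3/2})\le O(n^{-1/2})$.
\end{itemize}
This extra error is absorbed into $C_a'$.
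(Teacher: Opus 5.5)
Your proposal is correct and follows the paper's own proof. Both verify the hypotheses of Pinelis--Molzon's Theorem~3.8 (with $p=3$) for the centered vectors $V_i-v^{*}$ and the function $x\mapsto f(v^{*}+x)$: third moments via (M3) and (M4$'$), the local quadratic bound by Taylor's theorem with $L(x)=-a^{\top}I_{2}^{-1}x_{\psi}$ from Lemma~\ref{lem:mv-ift} and $M=\sup_{\bar B(v^{*},\eps_0)}\norm{\nabla^{2}(a^{\top}H)}$, and nondegeneracy $\norm{L(V_1-v^{*})}_2^2=a^{\top}Va>0$. The obstacle you flagged does not arise: as Theorem~\ref{berryesseen} states, the smoothness condition is needed only on a neighborhood of the origin, with $f$ merely Borel elsewhere. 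So your smooth-cutoff fallback is valid but unnecessary.
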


\begin{proof}
We verify the hypotheses of \cite[Theorem~3.8]{Pinelis2016} with $p=3$ for the centered i.i.d.\ vectors $V_{i}-v^{*}$ and the Borel function $x\mapsto f(v^{*}+x)$. \emph{Moments:} $\E\norm{V_{1}-v^{*}}^{3}<\infty$ by (M3) and (M4$'$). \emph{Smoothness:} condition (3.6) of \cite{Pinelis2016} requires a nonzero continuous linear functional $L$ and constants $M,\eps>0$ with $\abs{f(v^{*}+x)-L(x)}\le\tfrac{M}{2}\norm{x}^{2}$ for $\norm{x}\le\eps$. Take $\eps:=\eps_{0}$ and $L(x):=-a^{\top}I_{2}^{-1}x_{\psi}$, where $x_{\psi}$ is the $\psi$-block of $x$: on $\norm{x}\le\eps_{0}$ we have $f(v^{*}+x)=a^{\top}H(v^{*}+x)$, the differential of this map at $x=0$ is $L$ by the last claim of Lemma~\ref{lem:mv-ift}, and Taylor's theorem gives the quadratic bound with $M:=\sup_{\bar B(v^{*},\eps_{0})}\norm{\nabla^{2}\,(a^{\top}H)}$, finite because $H$ is smooth on a neighborhood of the closed ball and $\abs{a}=1$. \emph{Nondegeneracy:} the per-observation variance of the linear part is
\[
\norm*{L(V_{1}-v^{*})}_{2}^{2}\;=\;a^{\top}I_{2}^{-1}\,\E\big[\psi\psi^{\top}\big]\,I_{2}^{-1}a\;=\;a^{\top}Va\;>\;0,
\]
so the standardization in the display is exactly that of the theorem. The conclusion is the uniform bound (3.23) of \cite{Pinelis2016}.
\end{proof}

\begin{lemma}[Good event]\label{lem:mv-events}
Set $K^{*}:=(1+\E R^{*}_{1})/24$, $\delta_{1}:=\min\set{\delta,\ \rho_{0},\ (\lambda_{0}/(8K^{*}))^{1/3}}$, and
\[
G_{n}\;:=\;\set*{\norm{\bar V-v^{*}}\le\eps_{0}}\ \cap\ \set*{\bar R^{*}\le 1+\E R^{*}_{1}}\ \cap\ \set*{\abs{\hat\theta-\theta_{0}}\le\delta_{1}} .
\]
Then $\Prob(G_{n}^{\mathrm{c}})\le C\,n^{-3/2}$ for a constant $C$ not depending on $n$.
\end{lemma}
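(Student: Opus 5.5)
The plan is to bound the probability of each of the three events in the complement separately and then apply a union bound:
\[
\Prob(G_n^{\mathrm c})\le \Prob\big(\norm{\bar V-v^{*}}>\eps_0\big)+\Prob\big(\bar R^{*}>1+\E R^{*}_1\big)+\Prob\big(\abs{\hat\theta-\theta_0}>\delta_1\big).
\]
Each term is the deviation of an average of i.i.d.\ terms, or of an estimator, from its target by a fixed amount. The only moment information available is third moments, so the natural tool is a Markov inequality for the third absolute moment of a centered sum. With a third moment the Rosenthal (or Marcinkiewicz--Zygmund) inequality gives the rate $n^{-3/2}$ exactly, and this is the source of the exponent in the statement.

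\textbf{Event 1.} The vectors $V_i-v^{*}$ are i.i.d., centered, and have $\E\norm{V_1-v^{*}}^3<\infty$ by (M3) and (M4$'$), as already noted in the proof of Lemma~\ref{lem:mv-proxy}. The arrays live in a finite-dimensional space, so I would work coordinatewise, or use Rosenthal's inequality in Euclidean space. This gives $\E\norm{\sum_i(V_i-v^{*})}^3\le C\,(n^{3/2}(\E\norm{V_1-v^*}^2)^{3/2}+n\,\E\norm{V_1-v^*}^3)\le C' n^{3/2}$. Markov's inequality then yields
\[
\Prob\big(\norm{\bar V-v^{*}}>\eps_0\big)\le C'\eps_0^{-3}n^{-3/2}.
\]

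\textbf{Event 2.} Take $R^{*}_i:=\sup_{\bar B(\theta_0,\delta)}\norm{\nabla^4\psi_i}$. These are i.i.d.\ with $\E (R_1^{*})^3<\infty$ by (M4$'$). The event $\{\bar R^{*}>1+\E R^*_1\}$ is contained in $\{\abs{\bar R^{*}-\E R^{*}_1}>1\}$. The same Rosenthal-plus-Markov step applied to the centered scalars $R^{*}_i-\E R^*_1$ bounds its probability by $C n^{-3/2}$.

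\textbf{Event 3.} This is where the estimator itself enters. The paper asserts, via (A5), (A6) and the argument of Section~\ref{sec:remainder}, exponential consistency: for each fixed $\delta'>0$,
\[
\Prob(\abs{\hat\theta-\theta_0}>\delta')\le C_1e^{-c_2 n}.
\]
I would apply this with $\delta'=\delta_1$. The key point is that $\delta_1$ is a fixed positive number: $\delta$, $\rho_0$ and $\lambda_0$ are fixed, and $K^*$ is finite by (M4$'$). So $D(\delta_1)>0$ by (A5), and (A6) applies at that $\delta_1$. The constants $C_1,c_2$ may depend on $\delta_1$, but not on $n$. Finally, $C_1e^{-c_2n}\le C''n^{-3/2}$ for all $n$, with $C''=C_1\sup_n n^{3/2}e^{-c_2n}<\infty$.

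Summing the three bounds gives the lemma, with $C$ depending only on $\eps_0$, on the moments, and on $C_1,c_2$. The main obstacle is the third step. One must check that the exponential-consistency argument of Section~\ref{sec:remainder} really holds for an arbitrary fixed radius and in the multivariate setting, including that (A6) holds with the threshold $D(\delta_1)/2$. The paper states this as a consequence of the covering argument with a $\prod_j N_j$-point net, so I would simply invoke it. The only care needed is to note that it is used at the specific radius $\delta_1\le\delta$ rather than at $\delta$.
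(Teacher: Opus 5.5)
Your proposal is correct and follows the paper's proof essentially step for step. It uses a union bound over the three events, the Rosenthal-type third-moment inequality with Markov for $\bar V-v^{*}$ and for the centered scalars $R^{*}_{i}-\E R^{*}_{1}$, and exponential consistency from (A5)--(A6) at the radius $\delta_{1}$, which is then absorbed as $O(n^{-3/2})$. Your explicit check that $\delta_{1}>0$ is a fixed constant, so that $D(\delta_{1})>0$ and $C_{1},c_{2}$ do not depend on $n$, makes precise a point the paper leaves implicit.
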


\begin{proof}
The $V_{i}-v^{*}$ are i.i.d., centered, with $\E\norm{V_{1}-v^{*}}^{3}<\infty$ by (M3) and (M4$'$); the Rosenthal-type inequality for sums of independent random vectors (\cite[(3.5)]{Pinelis2016}, with $\alpha=3$) gives $\E\norm{\sum_{i\le n}(V_{i}-v^{*})}^{3}\le Cn^{3/2}$, and Markov's inequality yields $\Prob\big(\norm{\bar V-v^{*}}>\eps_{0}\big)\le Cn^{3/2}/(n\eps_{0})^{3}=C\eps_{0}^{-3}\,n^{-3/2}$. The same bound applied to the scalar sums $\sum_{i}(R^{*}_{i}-\E R^{*}_{1})$, whose summands have finite third moment by (M4$'$), controls the second event. For the third, (A5) and (A6) hold with the Euclidean norm, and the argument of Section~\ref{sec:remainder} gives $\Prob\big(\abs{\hat\theta-\theta_{0}}>\delta_{1}\big)\le C_{1}e^{-c_{2}n}$, which is $O(n^{-3/2})$ a fortiori.
\end{proof}

\begin{lemma}[From defect to distance]\label{lem:mv-defect}
On $G_{n}$, with $d:=\hat\theta-\theta_{0}$ and $\tilde\theta:=\theta_{0}+H(\bar V)$,
\[
\abs{d}\;\le\;\frac{4}{\lambda_{0}}\,\norm{\bar\psi}
\qquad\text{and}\qquad
\abs*{\hat\theta-\tilde\theta}\;\le\;\frac{2K^{*}}{\lambda_{0}}\paren*{\frac{4}{\lambda_{0}}}^{4}\norm{\bar\psi}^{4} .
\]
\end{lemma}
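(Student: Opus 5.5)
The plan is to work entirely with the polynomial map $F(\cdot;\bar V)$ on the ball $\bar B(0,\rho_{0})$ and use the injectivity estimate \eqref{eq:mv-inj} of Lemma~\ref{lem:mv-ift} twice. On $G_{n}$ we have $\norm{\bar V-v^{*}}\le\eps_{0}$, so Lemma~\ref{lem:mv-ift} applies with $v=\bar V$: the point $H(\bar V)$ lies in $\bar B(0,\rho_{0}/2)$ and satisfies $F(H(\bar V);\bar V)=0$. Also on $G_{n}$, $\abs{d}\le\delta_{1}\le\min\set{\delta,\rho_{0}}$. Hence $d$ lies in the ball where the fourth-order Taylor expansion of the estimating equation is valid, giving $F(d;\bar V)=-r_{4}$ with $\norm{r_{4}}\le\tfrac1{24}\bar R^{*}\abs{d}^{4}\le K^{*}\abs{d}^{4}$, since $\bar R^{*}\le 1+\E R^{*}_{1}$ on $G_{n}$. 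The point $d$ also lies in the ball where \eqref{eq:mv-inj} holds.

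For the first inequality I would compare $d$ with $t=0$ in \eqref{eq:mv-inj}. Since $F(0;\bar V)=\bar\psi$,
\[
\tfrac{\lambda_{0}}{2}\abs{d}\le\abs{F(d;\bar V)-F(0;\bar V)}=\abs{-r_{4}-\bar\psi}\le\norm{\bar\psi}+K^{*}\abs{d}^{4}.
\]
The choice $\delta_{1}\le(\lambda_{0}/(8K^{*}))^{1/3}$ gives $K^{*}\abs{d}^{3}\le\lambda_{0}/8$. So $K^{*}\abs{d}^{4}\le\tfrac{\lambda_{0}}{8}\abs{d}$, which can be absorbed into the left side. This yields $\tfrac{3\lambda_{0}}{8}\abs{d}\le\norm{\bar\psi}$, hence $\abs{d}\le\tfrac{8}{3\lambda_{0}}\norm{\bar\psi}\le\tfrac{4}{\lambda_{0}}\norm{\bar\psi}$.

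For the second inequality I would apply \eqref{eq:mv-inj} to the pair $t_{1}=d$ and $t_{2}=H(\bar V)$, both in $\bar B(0,\rho_{0})$:
\[
\tfrac{\lambda_{0}}{2}\abs{\hat\theta-\tilde\theta}=\tfrac{\lambda_{0}}{2}\abs{d-H(\bar V)}\le\abs{F(d;\bar V)-0}=\norm{r_{4}}\le K^{*}\abs{d}^{4}.
\]
Substituting the first bound gives $\abs{\hat\theta-\tilde\theta}\le\tfrac{2K^{*}}{\lambda_{0}}(4/\lambda_{0})^{4}\norm{\bar\psi}^{4}$, as claimed.

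The only delicate step is justifying the Taylor expansion with the stated remainder constant. The expansion is to third order in $d$ around $\theta_{0}$, applied coordinatewise to each $\psi_{i}$ and averaged. The integral form of the remainder bounds it by $\tfrac1{24}\sup\norm{\nabla^{4}\psi_{i}}\abs{d}^{4}$, uniformly over the segment from $\theta_{0}$ to $\hat\theta$, which stays in $\bar B(\theta_{0},\delta)$ because $\abs{d}\le\delta$. This uses (M1$'$) for the existence and continuity of the fourth derivatives.

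One also has to check that $\hat\theta$ actually solves $\sum_{i}\psi_{i}(\hat\theta)=0$. That holds because $\hat\theta$ is interior (it lies in $\bar B(\theta_{0},\delta)\subseteq\Theta^{\circ}$) and minimizes a differentiable criterion. With these checked, the remaining steps are direct substitutions.
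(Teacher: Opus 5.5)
Your proposal is correct and follows the paper's proof essentially line for line. You apply the injectivity estimate \eqref{eq:mv-inj} first to the pair $(d,0)$, absorbing the quartic remainder via $K^{*}\delta_{1}^{3}\le\lambda_{0}/8$; you keep the sharper $\tfrac{3\lambda_{0}}{8}$ where the paper relaxes to $\tfrac{\lambda_{0}}{4}$, which only improves the constant. You then apply it to the pair $(d,H(\bar V))$, as the paper does, and your extra checks (that $\hat\theta$ is an interior critical point, and that the integral-form remainder yields the factor $1/24$ under (M1$'$)) are details the paper takes for granted when it sets up the expansion.
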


\begin{proof}
On $G_{n}$ the expansion $F(d;\bar V)=-r_{4}$ holds with $\norm{r_{4}}\le\tfrac1{24}\bar R^{*}\abs{d}^{4}\le K^{*}\abs{d}^{4}$, and $\abs{d}\le\delta_{1}\le\rho_{0}$, $\norm{\bar V-v^{*}}\le\eps_{0}$, so the injectivity estimate \eqref{eq:mv-inj} is available on $\bar B(0,\rho_{0})$. Comparing $d$ with $0$, and using $F(0;\bar V)=\bar\psi$ together with $\abs{d}^{4}\le\delta_{1}^{3}\abs{d}$ and $K^{*}\delta_{1}^{3}\le\lambda_{0}/8$,
\[
\tfrac{\lambda_{0}}{2}\,\abs{d}\;\le\;\abs*{F(d;\bar V)-F(0;\bar V)}\;=\;\norm*{r_{4}+\bar\psi}\;\le\;\norm{\bar\psi}+K^{*}\delta_{1}^{3}\abs{d}\;\le\;\norm{\bar\psi}+\tfrac{\lambda_{0}}{4}\abs{d},
\]
which rearranges to the first claim. Comparing $d$ with $H(\bar V)$---both lie in $\bar B(0,\rho_{0})$, the former because $\delta_{1}\le\rho_{0}$ and the latter by Lemma~\ref{lem:mv-ift}---and using $F(H(\bar V);\bar V)=0$,
\[
\tfrac{\lambda_{0}}{2}\,\abs*{d-H(\bar V)}\;\le\;\abs*{F(d;\bar V)-F(H(\bar V);\bar V)}\;=\;\norm{r_{4}}\;\le\;K^{*}\abs{d}^{4}\;\le\;K^{*}\paren*{\frac{4}{\lambda_{0}}}^{4}\norm{\bar\psi}^{4},
\]
by the first claim, which rearranges to the second.
\end{proof}

\begin{theorem}[Directional Berry--Esseen bound]\label{thm:mv-be}
Assume (M1$'$), (M2), (M3), (M4$'$) and (A5)--(A6). For every unit vector $a\in\mathbb{R}^{k}$ with $a^{\top}Va>0$ there is a constant $C_{a}<\infty$, depending on the direction only through $a^{\top}Va$ and otherwise on the moments and constants in the assumptions but not on $n$, such that
\[
\sup_{z\in\mathbb{R}}\ \abs*{\Prob\paren*{\sqrt{\frac{n}{a^{\top}Va}}\;a^{\top}\big(\hat\theta-\theta_{0}\big)\le z}-\Phi(z)}\ \le\ \frac{C_{a}}{\sqrt{n}} .
\]
\end{theorem}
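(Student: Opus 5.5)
The proof assembles the four lemmas through a standard perturbation argument. Write $s:=\sqrt{n/(a^{\top}Va)}$, $T_{n}:=s\,a^{\top}(\hat\theta-\theta_{0})$ for the standardized estimator, and $\tilde T_{n}:=s\,f(\bar V)$ for the proxy of Lemma~\ref{lem:mv-proxy}. On $G_{n}$ we have $\norm{\bar V-v^{*}}\le\eps_{0}$, so $f(\bar V)=a^{\top}H(\bar V)=a^{\top}(\tilde\theta-\theta_{0})$. Lemma~\ref{lem:mv-defect} then gives, on $G_{n}$,
\[
\abs{T_{n}-\tilde T_{n}}\;\le\;s\,\abs{\hat\theta-\tilde\theta}\;\le\;s\,K_{1}\norm{\bar\psi}^{4},\qquad K_{1}:=\frac{2K^{*}}{\lambda_{0}}\paren*{\frac{4}{\lambda_{0}}}^{4}.
\]
The plan is to show that this gap exceeds $\eps_{n}:=n^{-1/2}$ only with probability $O(n^{-1/2})$. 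The elementary sandwich inequality then finishes the argument: for every $z$,
\[
\Prob(\tilde T_{n}\le z-\eps_{n})-\Prob(E_{n})\;\le\;\Prob(T_{n}\le z)\;\le\;\Prob(\tilde T_{n}\le z+\eps_{n})+\Prob(E_{n}),
\]
where $E_{n}:=G_{n}^{\mathrm c}\cup\set{\abs{T_{n}-\tilde T_{n}}>\eps_{n}}$.

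We then bound each side with Lemma~\ref{lem:mv-proxy}, which gives $\Prob(\tilde T_{n}\le z\pm\eps_{n})=\Phi(z\pm\eps_{n})+O(n^{-1/2})$ uniformly in $z$. Since $\Phi$ is Lipschitz with constant $(2\pi)^{-1/2}$, we also have $\abs{\Phi(z\pm\eps_{n})-\Phi(z)}\le\eps_{n}/\sqrt{2\pi}$. It therefore suffices to establish $\Prob(E_{n})=O(n^{-1/2})$.

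Lemma~\ref{lem:mv-events} already gives $\Prob(G_{n}^{\mathrm c})\le Cn^{-3/2}$. For the second event, note that on $G_{n}$ the inequality $\abs{T_{n}-\tilde T_{n}}>\eps_{n}$ forces
\[
\norm{\bar\psi}^{4}>\frac{\eps_{n}}{sK_{1}}\asymp n^{-1},\qquad\text{that is,}\qquad \norm{\bar\psi}>c\,n^{-1/4}.
\]
The Rosenthal bound $\E\norm{\sum_{i}\psi_{i}(\theta_{0})}^{3}\le Cn^{3/2}$ used in Lemma~\ref{lem:mv-events}, combined with Markov's inequality at the third power, gives
\[
\Prob\big(\norm{\bar\psi}>c\,n^{-1/4}\big)\;\le\;\frac{Cn^{-3/2}}{c^{3}n^{-3/4}}=O(n^{-3/4}),
\]
which is comfortably $O(n^{-1/2})$. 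Only third moments are available by (M3), so Chebyshev at the second power would also suffice here, giving $O(n^{-1/2})$ exactly. This is the only place the fourth-order size of the defect is used. A quadratic remainder, as in the naive linearization, would have needed $\norm{\bar\psi}\lesssim n^{-1/2}$ at probability $1-O(n^{-1/2})$, which fails.

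Finally, we track constants. $C_{a}$ collects $C_{a}'$, the constant of Lemma~\ref{lem:mv-events}, $(2\pi)^{-1/2}$, and $K_{1}$, $\lambda_{0}$ and the third moment of $\psi$. Direction enters only through $a^{\top}Va$ via $s$, and through $C_{a}'$. For $C_{a}'$ one checks that the constant $M$ in the proof of Lemma~\ref{lem:mv-proxy} is bounded by $\sup\norm{\nabla^{2}H}$ uniformly over unit $a$, and that Pinelis--Molzon's constant depends on the linear part only through its variance $a^{\top}Va$.

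I expect this bookkeeping in Lemma~\ref{lem:mv-proxy} to be the main obstacle. The probabilistic steps are routine; the delicate point is confirming that the Pinelis--Molzon constant in Theorem~3.8 of \cite{Pinelis2016} depends on $L$ only through $\norm{L(V_{1}-v^{*})}_{2}$ together with uniform bounds on $\norm{L}$ and $M$. If it does not, I would instead state $C_{a}$ as depending on $a$ through those quantities, all of which are bounded uniformly on the unit sphere.
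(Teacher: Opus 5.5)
Your proposal is correct and follows the paper's proof essentially step for step. It uses the same proxy $s\,f(\bar V)$, the same perturbation inequality with $\eps=n^{-1/2}$ and the Lipschitz constant $1/\sqrt{2\pi}$ of $\Phi$, Lemma~\ref{lem:mv-defect} on $G_{n}$ to reduce $\set{\abs{T_{n}-\tilde T_{n}}>n^{-1/2}}$ to $\set{\norm{\bar\psi}\gtrsim n^{-1/4}}$, and the Rosenthal bound with Markov at the third power to reach $O(n^{-3/4})$. The uniformity-in-$a$ bookkeeping for $C_{a}'$ that you flag is not carried out in the paper either, which simply states that collecting constants gives $C_{a}$.
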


\begin{proof}
Write $T:=\sqrt{n/(a^{\top}Va)}\;a^{\top}d$ and $T_{f}:=\sqrt{n/(a^{\top}Va)}\;f(\bar V)$ with $f$ as in Lemma~\ref{lem:mv-proxy}, and set $\Delta:=T-T_{f}$. For any $\eps>0$,
\[
\sup_{z}\abs*{\Prob(T\le z)-\Phi(z)}\ \le\ \sup_{z}\abs*{\Prob(T_{f}\le z)-\Phi(z)}\;+\;\Prob\big(\abs{\Delta}>\eps\big)\;+\;\frac{\eps}{\sqrt{2\pi}},
\]
the last term because $\Phi$ is Lipschitz with constant $1/\sqrt{2\pi}$. Take $\eps=n^{-1/2}$. The first term is at most $C_{a}'/\sqrt{n}$ by Lemma~\ref{lem:mv-proxy}. For the middle term: on $G_{n}$ we have $\norm{\bar V-v^{*}}\le\eps_{0}$, hence $f(\bar V)=a^{\top}H(\bar V)=a^{\top}(\tilde\theta-\theta_{0})$, and Lemma~\ref{lem:mv-defect} with $\abs{a}=1$ gives
\[
\abs{\Delta}\;\le\;\sqrt{\frac{n}{a^{\top}Va}}\;\abs*{\hat\theta-\tilde\theta}\;\le\;c_{*}\,\big(\sqrt{n}\,\norm{\bar\psi}\big)^{4}\,n^{-3/2},
\qquad
c_{*}:=\frac{2K^{*}}{\lambda_{0}}\paren*{\frac{4}{\lambda_{0}}}^{4}\frac{1}{\sqrt{a^{\top}Va}} .
\]
Therefore, by Lemma~\ref{lem:mv-events} and Markov's inequality applied at the third power,
\[
\Prob\big(\abs{\Delta}>n^{-1/2}\big)\ \le\ \Prob\big(G_{n}^{\mathrm{c}}\big)+\Prob\Big(\big(\sqrt{n}\norm{\bar\psi}\big)^{4}>n/c_{*}\Big)
\ \le\ C\,n^{-3/2}+\E\big(\sqrt{n}\norm{\bar\psi}\big)^{3}\,c_{*}^{3/4}\,n^{-3/4},
\]
and $\E\big(\sqrt{n}\norm{\bar\psi}\big)^{3}$ is bounded uniformly in $n$ by the Rosenthal-type inequality of Lemma~\ref{lem:mv-events}, since $\E\norm{\psi(\theta_{0})}^{3}<\infty$ by (M3). Every contribution is $O(n^{-1/2})$; collecting constants gives $C_{a}$.
\end{proof}

\begin{remark}
The extra smoothness is the price of dimension: the univariate bracketing of Section~\ref{sec:remainder} needs three derivatives of the criterion, the proxy route five. The exchange is forced. There is no quadratic formula in $\mathbb{R}^{k}$ to bracket with, and the first-order (Newton) proxy leaves a quadratic defect of exactly the order $n^{-1/2}$ under scrutiny; the third-order proxy leaves a quartic defect, which Lemma~\ref{lem:mv-defect} and a third-moment bound push strictly below it.
\end{remark}

For the reporting-scale estimator of Section~\ref{sec:scale}, conditions (M1)--(M4)---and equally (M1$'$) and (M4$'$)---with $m_{(\theta,c)}=c\,p_{\theta}$ reduce to their univariate counterparts for the underlying family: all derivatives in $c$ beyond the first vanish ($\partial_{c}m=p_{\theta}$, $\partial_{c}^{2}m=0$), so each condition is implied by the corresponding moment condition on $p_{\theta}$ and its $\theta$-derivatives, and Theorem~\ref{thm:mv-be} applies to $(\hat\theta,\hat c)$. The finite-sample guarantee anticipated there is therefore not an extra assumption but a corollary of this section.

We validate the multivariate theory on the elicitation problem it is actually for: recovering a location \emph{and} a scale at once. Take $p_{(\mu,s)}(x)=\varphi\big((x-\mu)/s\big)/s$ with $\theta_{0}=(\mu_{0},s_{0})=(2,1.5)$, $\sigma=0.1$, and the least-squares fit over the box $[-3,7]\times[0.3,5]$; the family is infinitely differentiable in $(\mu,s)$ with Gaussian envelopes, so (M1$'$)--(M4$'$) hold with room to spare. At $\theta_{0}$ the gradient components $\partial_{\mu}p=p\,u/s$ and $\partial_{s}p=p\,(u^{2}-1)/s$ (with $u=(x-\mu_{0})/s_{0}$) are odd and even in $u$ respectively, so both $I_{1}$ and $I_{2}$ are diagonal and the sandwich $V$ is diagonal as well: location and scale are estimated asymptotically independently, with $V=\operatorname{diag}(0.0543,\ 0.0489)$ at $\sigma=0.1$. Simulation over $4{,}000$ replications per sample size confirms every layer (Figure~\ref{fig:ch5-multivariate}): the per-coordinate mean squared errors decay with fitted log--log slopes $-1.03$ ($\mu$) and $-1.02$ ($s$) over $n=25,\ldots,400$; at $n=200$ the empirical $n\,\mathrm{Cov}$ matrix is $\big(\begin{smallmatrix}0.0546&-0.0006\\-0.0006&0.0495\end{smallmatrix}\big)$ against the predicted diagonal above; and the errors standardized by $\sqrt{a^{\top}Va/n}$ along the directions $e_{1}$, $e_{2}$ and $(e_{1}+e_{2})/\sqrt2$ pass Kolmogorov--Smirnov tests against $\mathcal{N}(0,1)$ with statistics $0.010$--$0.012$ ($p$-values $0.58$--$0.84$). The finite-sample bound itself is confirmed in the only sense a simulation of this size can resolve: from $n\approx50$ onward the measured Kolmogorov distances sit at the resolution floor of $4{,}000$ replications ($\approx0.014$), so the distance to normality is already below measurement precision at sample sizes an elicitation session would actually use---the $n^{-1/2}$ decay predicted by the bound cannot be distinguished because there is nothing left to decay.

\begin{figure}
	\centering
	\includegraphics[width=\linewidth]{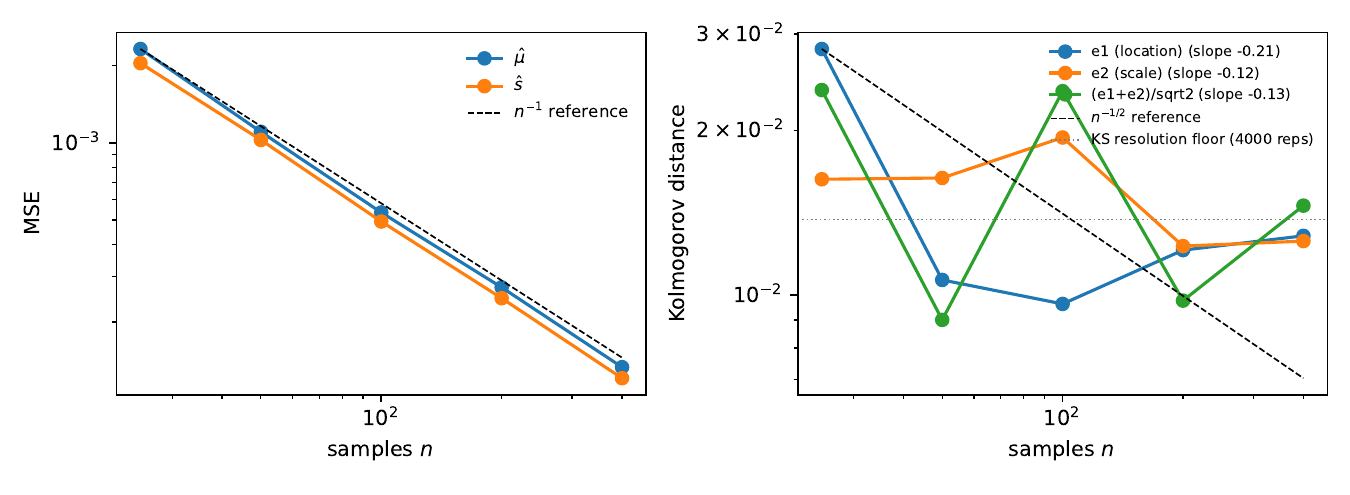}
	\caption[Multivariate elicitation: normal location--scale family.]{Joint elicitation of $(\mu,s)$ for $\mathcal{N}(\mu,s^{2})$ at $\theta_{0}=(2,1.5)$, $\sigma=0.1$, $4{,}000$ replications per point. Left: per-coordinate mean squared error against $n$ (log--log), with fitted slopes $-1.03$ and $-1.02$ against the $n^{-1}$ reference. Right: Kolmogorov distance of the standardized directional errors to $\mathcal{N}(0,1)$; from $n\approx50$ the distances sit at the resolution floor of the replication budget (dotted), so convergence to normality is complete within measurement precision.}
	\label{fig:ch5-multivariate}
\end{figure}

\section{Experiments}
\label{sec:elicitation-experiments}

To validate our learnability results, we performed experiments using synthetic data. Throughout, we simulate the batch data that would be obtained from an elicitation by sampling a ``target'' distribution $p_{\theta_0}$ and reporting sample--likelihood pairs $(x_i, z_i)$ with $z_i = p_{\theta_0}(x_i)$. To model an expert who can give realistic samples but may be variably good at assessing their plausibility, we corrupt the reported likelihoods multiplicatively, $z_i = p_{\theta_0}(x_i)(1+\sigma\xi_i)$ with $\xi_i\sim\mathcal{N}(0,1)$, where $\sigma$ controls the expert's reliability. We then learn $\theta$ by minimizing the least-squares objective of Section~\ref{sec:objective} (a global grid search over a bounded parameter interval followed by local refinement, so that reported failures reflect the information in the data rather than optimization artifacts), and assess the quality of the elicitation by the squared error of the learned parameter and the KL divergence of the learned distribution from the target. We use two univariate families: the location family $\mathcal{N}(\theta,1)$ with $\theta_0=2$, searched over $\Theta=[-3,7]$, and the shape family $\mathrm{Beta}(\theta,2)$ with $\theta_0=3$, searched over $\Theta=[1.1,10]$---in each case the parameter set on which the uniform-concentration condition (A6) is verified in Section~\ref{sec:appendix-a6}, so the guarantees apply to the estimator exactly as implemented.

A degenerate feature of the noise-free setting is worth noting: when $z_i=p_{\theta_0}(x_i)$ exactly, the objective vanishes at $\theta_0$ and (for identifiable families) the estimator recovers $\theta_0$ exactly at every sample size. The statistically interesting regime---and the one covered by the asymptotic theory, whose limiting variance is $I_1(\theta_0)/(nI_2(\theta_0)^2)$ with $I_1$ driven by the noise---is $\sigma>0$, which is also the realistic description of a human expert.

\subsection{Validating the estimator and the learning rate}

We vary the number of samples per batch from 2 to 40 and run 200 batches per configuration, declaring a batch a success when $|\hat\theta-\theta_0|<0.25$. Figure \ref{fig:elicitation-success} shows that the success rate is directly related to the sample size at every noise level, and degrades gracefully with $\sigma$: for the normal family, even the noisiest expert ($\sigma=0.2$) is elicited successfully $96\%$ of the time by $n=8$ samples, while the harder beta shape family requires roughly $40$ samples at that noise level ($34\%$ success at $n=2$, rising to $96\%$ at $n=40$). Figure \ref{fig:elicitation-rate} shows the mean squared parameter error against the sample size on log--log axes. For $n\gtrsim 8$ the fitted log--log slopes lie in $[-1.19,-0.95]$ across both families and all three noise levels, matching the $n^{-1}$ rate predicted by the asymptotic normality result. Over the full plotted range $n=2,\ldots,40$ the fitted slopes are steeper, between $-1.89$ and $-1.25$: at the smallest batch sizes the estimator is still far from its asymptotic regime, and the error falls faster than $n^{-1}$ before settling onto the predicted rate. The KL divergence behaves identically---for the normal location family it equals $(\hat\theta-\theta_0)^2/2$ exactly, and for the beta family it is locally quadratic in the parameter error---so we do not plot it separately.

\begin{figure}
	\centering
	\includegraphics[width=0.48\linewidth]{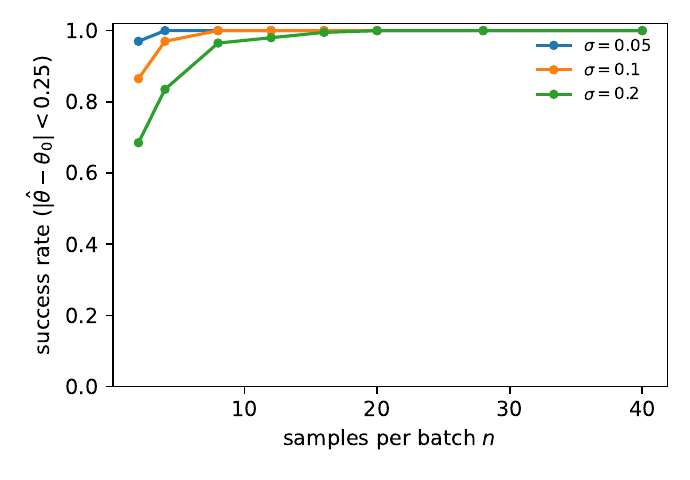}
	\hfill
	\includegraphics[width=0.48\linewidth]{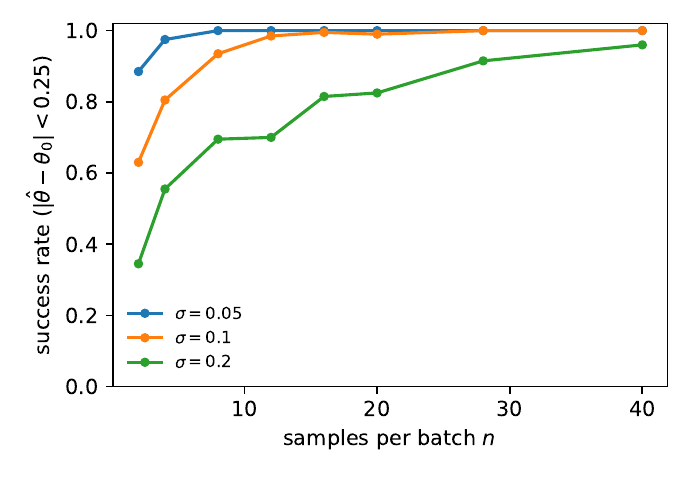}
	\caption[Elicitation success rate versus samples per batch.]{Success rate of elicitation ($|\hat\theta-\theta_0|<0.25$) versus samples per batch over 200 batches, at three levels of expert noise. Left: $\mathcal{N}(\theta,1)$, $\theta_0=2$. Right: $\mathrm{Beta}(\theta,2)$, $\theta_0=3$.}
	\label{fig:elicitation-success}
\end{figure}

\begin{figure}
	\centering
	\includegraphics[width=0.48\linewidth]{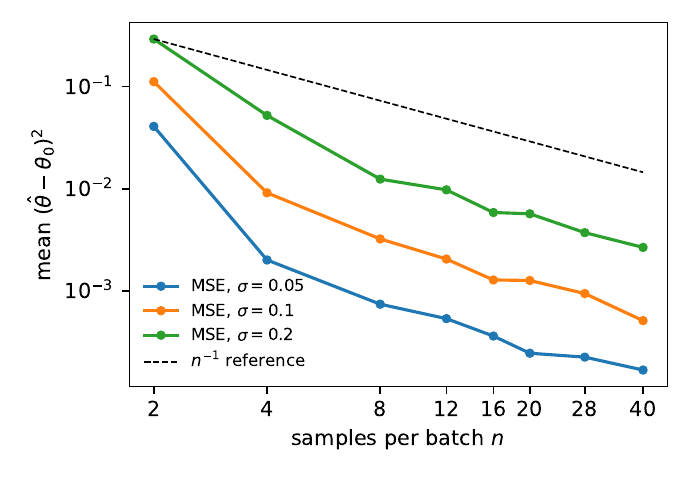}
	\hfill
	\includegraphics[width=0.48\linewidth]{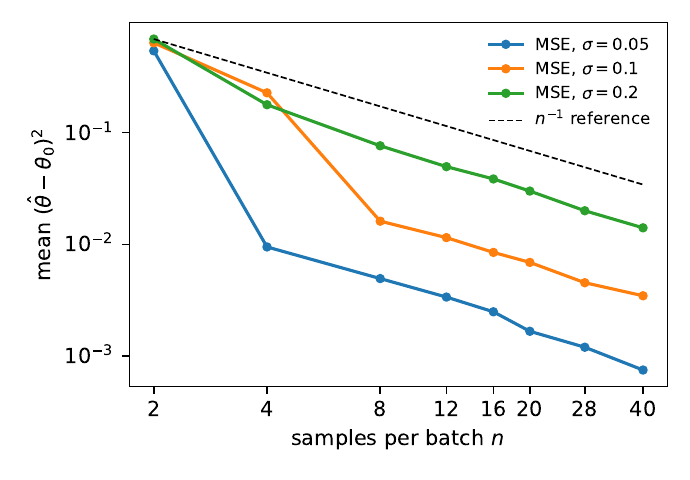}
	\caption[Mean squared error of the elicited parameter.]{Mean squared error of the elicited parameter versus samples per batch (log--log), with an $n^{-1}$ reference line. Left: normal location family. Right: beta shape family.}
	\label{fig:elicitation-rate}
\end{figure}

\subsection{Validating asymptotic normality}

The finite-sample theory developed below asserts that $\sqrt{n I_2(\theta_0)^2/I_1(\theta_0)}\,(\hat\theta-\theta_0)$ is close to standard normal, with Kolmogorov distance decaying as $C/\sqrt{n}$. We check this directly: for the normal family with $\sigma=0.1$ and $n=200$, we compute the standardized error over $2{,}000$ replications, estimating $I_1(\theta_0)=\E\,\psi^2((X,Z),\theta_0)$ and $I_2(\theta_0)$ by Monte Carlo. The resulting sample has mean $-0.03$ and standard deviation $1.00$, and a Kolmogorov--Smirnov test against $\mathcal{N}(0,1)$ gives statistic $0.023$ ($p=0.25$): the sampling distribution is statistically indistinguishable from the theoretical limit at this sample size. Figure \ref{fig:elicitation-normality} overlays the histogram on the standard normal density.

\begin{figure}
	\centering
	\includegraphics[width=0.6\linewidth]{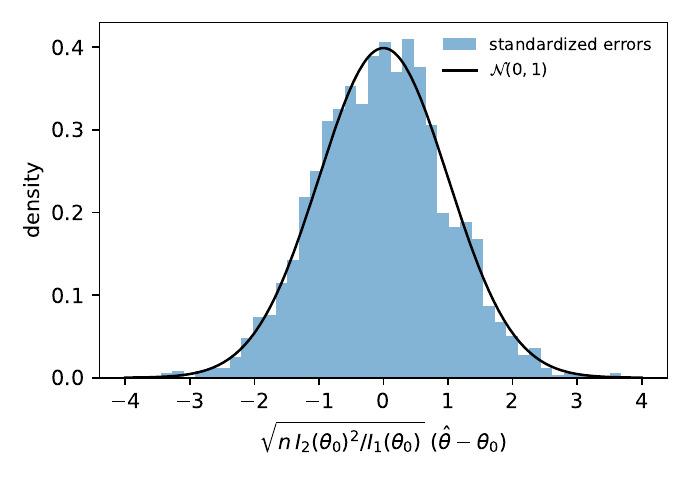}
	\caption[Sampling distribution of the standardized elicitation error.]{Histogram of the standardized error $\sqrt{n I_2(\theta_0)^2/I_1(\theta_0)}\,(\hat\theta-\theta_0)$ over $2{,}000$ elicitation replications (normal family, $n=200$, $\sigma=0.1$), with the $\mathcal{N}(0,1)$ density overlaid.}
	\label{fig:elicitation-normality}
\end{figure}

\subsection{A family with no moments}
\label{sec:cauchy}

The objective is defined by the density alone, so it remains well posed for families for which moment matching is undefined. We take the Cauchy location family $p_{\theta}(x)=[\pi(1+(x-\theta)^{2})]^{-1}$ with $\theta_0=2$, which has no mean and no finite absolute moment of order $\ge1$ (its absolute moments of order $0\le p<1$ are finite, equal to $\sec(\pi p/2)$ in the standard case). The sample mean therefore does not converge: over five independent batches of $n=40$ it took the values $-0.11$, $4.14$, $2.19$, $2.49$ and $2.10$. A method-of-moments fit has nothing to match. The least-squares elicitation estimator is unaffected. Over $300$ replications it recovers $\theta_0$ to within $0.25$ in every replication at $\sigma\in\set{0.05,0.1}$ for $n=10,20,40$, and at $\sigma=0.2$ its success rate rises from $93\%$ at $n=10$ to $99\%$ at $n=20$ and $100\%$ at $n=40$. Regressing $\log$ mean squared error on $\log n$ over $n=10,\ldots,80$ at $\sigma=0.1$ gives a slope of $-1.16$, consistent with the $n^{-1}$ rate seen for the normal and beta families.

\subsection{How noisy are real reported magnitudes? A semi-synthetic check}
\label{sec:semisynthetic}

The crossover $\sigma^{\ast}$ of Section~\ref{sec:ls-vs-mle} is only useful if real reporting noise can fall below it, and no public dataset of density elicitation with known ground truth exists against which to check this. The closest well-replicated task is the judgment of annual death frequencies: participants state the number of deaths per year for each of up to $41$ causes, and the true frequencies are known. Pachur \cite{Pachur2024} collated the original study of Lichtenstein et al.\ \cite{Lichtenstein1978} with its replications---eleven datasets from eight studies spanning $1978$--$2020$ and three countries, each reporting the aggregate (geometric-mean or median) judged frequency per cause. We fit the chapter's reporting models to each dataset. Under the scale-only model $z=c\,p\,(1+\sigma\xi)$ of Section~\ref{sec:scale} the implied relative error is $2.9$--$13.3$: the dominant deviation is not noise at all but \emph{compression}, the classic primary bias---regressing $\log z$ on $\log p$ gives slopes $b$ between $0.42$ and $0.72$ (median $0.48$) rather than $1$. Allowing the compression, $z=c\,p^{b}(1+\sigma\xi)$, the residual relative error is $\hat\sigma=0.79$--$1.97$: \emph{above the crossover in every dataset} (Figure~\ref{fig:ch5-semisynthetic}, left). For reports of absolute magnitudes, then, the verdict is negative---a sample-only maximum-likelihood fit would beat plausibility-weighted least squares on this task, and since these are aggregates over $39$--$85$ participants, individual reporting noise is higher still. The mitigating consideration cuts the other way: judging absolute frequencies of $41$ disparate causes across five orders of magnitude is a recall task about the world, not the local, relative judgment about one's own belief that the elicitation instrument requests, so these values are better read as an upper bound on elicitation-type reporting noise. Direct measurement of $\sigma$ for relative-plausibility reports is the human-subject study that remains open.

What can be elicited \emph{at} these noise levels? We simulate a calibrated expert, $z=c\,p_{\theta_0}(x)^{b}(1+\sigma\xi)$ with the empirically fitted $(b,\sigma)$---a reporting process that includes the compression our estimator does not model---and run the joint $(\theta,c)$ fit of Section~\ref{sec:scale}. For the symmetric location family, compression is a symmetric widening ($p^{b}$ is proportional to a density with the same center), so the location estimate remains \emph{unbiased} even at the worst-case calibration $(b,\sigma)=(0.5,1.0)$: across $n=8$ to $80$ the absolute bias never exceeds $0.04$, and the success rate climbs from $0.25$ to $0.72$ ($0.81$ under the best-dataset calibration $(0.45,0.79)$; Figure~\ref{fig:ch5-semisynthetic}, right). The loss relative to the uncompressed model at $\sigma^{\ast}$ (success $0.97$ at $n=80$) is pure variance, curable by asking for more examples. Shape parameters are not protected: the beta fit under the same calibrations acquires an asymptotic bias of $-0.19$ to $-0.26$ and its success rate plateaus near one third. The practical reading is that elicitation degrades gracefully in exactly one direction---the \emph{location} of a belief survives even the harshest documented reporting behavior, while precision beyond location is what expert noise destroys first---and the open empirical question is sharpened accordingly: what matters is $\sigma$ for local relative judgments, not for absolute magnitudes.

\begin{figure}
	\centering
	\includegraphics[width=\linewidth]{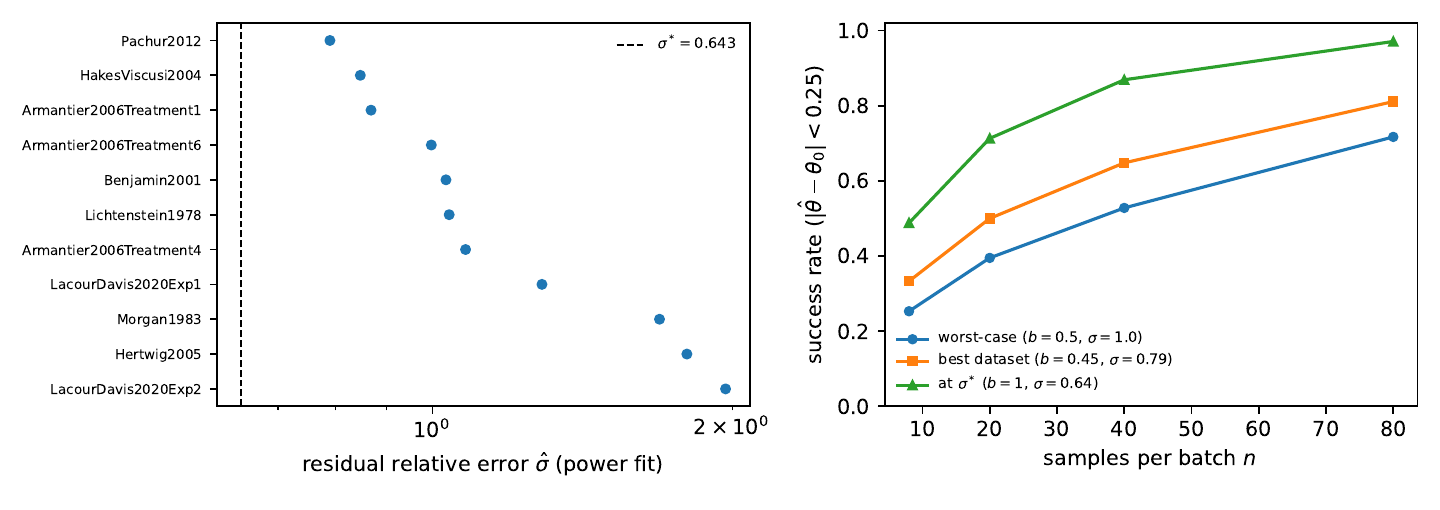}
	\caption[Reporting noise in real frequency judgments and calibrated elicitation.]{Left: residual relative error $\hat\sigma$ of aggregate human frequency judgments in the eleven datasets collated by \cite{Pachur2024}, after fitting the power reporting model $z=c\,p^{b}(1+\sigma\xi)$ per dataset (log scale); every dataset lies above the crossover $\sigma^{\ast}$ (dashed). Right: success rate of the joint $(\theta,c)$ estimator for the normal location family when the expert is simulated with empirically calibrated compression and noise; the location estimate is unbiased in all three conditions, and the gap between the curves is variance only.}
	\label{fig:ch5-semisynthetic}
\end{figure}

\section{Proof of Theoretical Bound}
Here, we provide a theoretical proof of the main asymptotic result. This follows the technique introduced in \cite{Pinelis2017} for maximum likelihood estimators. Pinelis briefly states that his result could be extended to M-estimators, and in the following, we fully exposit the proof for the general class of M-estimators, which requires adjustments to the assumptions in \cite{Pinelis2017}.

This proof is organized as follows.

\begin{enumerate}
	\item We describe the general problem setting and assumptions required.
	\item We demonstrate tight bracketing of our M-estimator between two functions of the sum of independent random vectors.
	\item We present uniform and nonuniform optimal-order bounds on the convergence rate in the multivariate delta method \cite{Pinelis2016}.
	\item We apply the general bounds in the multivariate delta method such that we can make bracketing work.
	\item We bound the remainder and show this is asymptotically negligible under certain conditions.
\end{enumerate}

\subsection{Setting and Assumptions}\label{sec:setting}

Let $X, X_1, X_2, \ldots$ be random variables mapping from $(\Omega, \mathcal{A})$ to $(\mathcal{X},\mathcal{B})$ and let $(P_{\theta})_{\theta\in \Theta}$ be a parametric family of probability measures such that $X, X_1, X_2, \ldots$ are i.i.d. with respect to each of the measures $P_{\theta}$ with $\theta \in \Theta$. In this section $\Theta\subseteq \mathbb{R}$, i.e.\ the parameter space is a subset of the real line: the bracketing device below solves a scalar quadratic equation and is genuinely one-dimensional. Section~\ref{sec:multivariate} lifts the restriction by a different route.

Let $E_{\theta}$ be the expectation with respect to $P_{\theta}$. For each $\theta\in \Theta$, $P_{\theta} X^{-1}$ of $X$ has a density $p_{\theta}$ with respect to a measure $\mu$ on $\mathcal{B}$.

Because the extended real line $[-\infty,\infty]$ is compact, for each $n\in \mathbb{N}$ and point $x=x_n=(x_1,\ldots,x_n)\in \mathcal{X}^n$, the sample criterion $\Theta\ni \theta\mapsto L_n(\theta) = \sum_{i=1}^n -(l_{x_i}(\theta)-z_i)^2 $ has at least one generalized maximizer $\hat{\theta}_n(x)$ in the closure of $\Theta$. Throughout, $\ell_{x,z}$ denotes the per-observation criterion and $L_n$ the sum; the assumptions below are stated for $\ell_{X,Z}$.

Let $\theta_{0}\in \Theta$ be the expert's target value of the parameter $\theta$, such that
\begin{center}
	$[\theta_{0}-\delta,\ \theta_{0}+\delta]\subseteq \theta^{\circ}$ 
\end{center}
for some real $\delta>0$, where $\theta^{\circ}$ denotes the interior of the subset $\Theta$ of $\mathbb{R}$.

For convenience, we provide the assumptions for $\ell$ again below.

\begin{enumerate}
	\item The set $\mathcal{X}_{>0} :=\{x\in \mathcal{X}:p_{\theta}(x)>0\}$ is the same for all $\theta\in [\theta_{0}-\delta,\ \theta_{0}+\delta],$ and for each $x\in \mathcal{X}_{>0}$ $\ell_{x}(\theta)$ is thrice differentiable in $\theta$ at each point $\theta\in [\theta_{0}-\delta,\ \theta_{0}+\delta].$
	\item $\mathrm{E}\ell_{X,Z}^{'}(\theta_{0})^{2}=I_1(\theta_0)$ and $-\mathrm{E}\ell_{X,Z}^{''}(\theta_{0})=I_2(\theta_{0})\in (0,\infty)$.
	\item $\mathrm{E}|\ell_{X,Z}'(\theta_{0})|^{3}+\mathrm{E}|\ell_{X,Z}^{''}(\theta_{0})|^{3}<\infty.$
	\item $\mathrm{E} \sup |\ell_{X,Z}^{'''}(\theta)|^{3}<\infty.$
\end{enumerate}

\medskip

\subsection{Tight Bracketing}

Without loss of generality (w.l.o.g.), $\mathcal{X}_{>0}=\mathcal{X}$. Then on the event
\begin{equation}\label{eqn:goodevent}
	G:=\{\hat{\theta}\in[\theta_{0}-\delta,\ \theta_{0}+\delta]\}
\end{equation}
($G$ for ``good event," one must have

\begin{equation}\label{eqn:taylor}
	0 =\ell_{\mathrm{x}}'(\displaystyle \hat{\theta})=\ell_{\mathrm{x}}'(\theta_{0})+(\hat{\theta}-\theta_{0})\ell_{\mathrm{x}}''(\theta_{0})+\frac{(\hat{\theta}-\theta_{0})^{2}}{2}\ell_{\mathrm{x}}'''(\theta_{0}+\xi(\hat{\theta}-\theta_{0}))
\end{equation}

\begin{equation}\label{eqn:taylor2}
	=n(\overline{Z}-(\hat{\theta}-\theta_{0})\overline{U}+\frac{(\hat{\theta}-\theta_{0})^{2}}{2}\overline{R})
\end{equation}

for some $\xi\in (0,1)$ as a function of the $X_i$'s, where $\overline{Z}=\frac{1}{n}\sum_{i=1}^n Z_i,\, \overline{U}=\frac{1}{n}\sum_{i=1}^{n}U_{i},\, \overline{R}:=\frac{1}{n}\sum_{i=1}^{n}R_{i},\,
\overline{R^{*}}:=\frac{1}{n}\sum_{i=1}^{n}R_{i}^{*},$

\begin{equation}\label{eqn:Z}
	Z_i = \ell'_{X_i}(\theta_0),\quad U_i=-\ell''_{X_i}(\theta_0)
\end{equation}

\begin{equation}\label{eqn:R}
	R_i = \ell'''_{X_i}(\theta_0 + \xi(\hat{\theta}-\theta_0))\in [-R_i^{*},R_i^{*}], \quad R_i^{*}=\sup_{\theta\in [\theta_0-\delta,\theta_0+\delta]} \abs{\ell'''_{X_i}(\theta)}.
\end{equation}

Looking at \eqref{eqn:taylor} and \eqref{eqn:taylor2}, one has a quadratic equation for $\hat{\theta}$.

On the event $G$ one has
\begin{align*}
	\hat{\theta}-\theta_{0}=\frac{\overline{Z}}{\overline{U}} &\textrm{ if } \overline{R}=0\, \&\, \overline{U}\neq 0, \\
	\hat{\theta}-\theta_{0}\in\{d_{+},\ d_{-}\} &\textrm{ if } \overline{R}\neq 0,
\end{align*}

where
$$
d_{\pm}:=\frac{\overline{U}\pm\sqrt{\overline{U}^{2}-2\overline{Z}\overline{R}}}{\overline{R}}.
$$

One defines a ``bad event" by letting

$B :=B_{1}\cup B_{2}$, where

$B_{1} :=\{\overline{R}\neq 0,\hat{\theta}-\theta_{0}=d_{+}\}\cup\{\overline{U}\leq 0\}$ and $B_{2} :=\{\overline{U}^{2}\leq 2|\overline{Z}|\overline{R^{*}}\}.$

On the event $B_1 \cap \{\overline{U}>0\}$, one sees $|\hat{\theta}-\theta_{0}|=|d_{+}|\geq \overline{U}/|\overline{R}|\geq\overline{U}/\overline{R^{*}}$

By \eqref{eqn:goodevent},
\begin{equation}\label{eqn:bound_intersection}
	\Prob(G\cap B_{1})\leq \Prob(\overline{U}\leq 0 \textrm{ or } \frac{\overline{U}}{\overline{R^{*}}}\leq \delta) =\Prob( \frac{\overline{U}}{\overline{R^{*}}}\leq \delta)=\Prob(\sum_{i=1}^{n}(U_{i}-\delta R_{i}^{*})\leq 0).
\end{equation}

And by the assumptions for $\ell$ and the definitions for $Z_i$, $U_i$, $R_i$, and $R^{*}_i$,

\[
\mathrm{E}U_{1}>0,\, \mathrm{E}|Z_{1}|^{3}<\infty,\, \mathrm{E}|U_{1}|^{3}<\infty,\, \mathrm{E}(R_{1}^{*})^{3}<\infty.
\]

Therefore, $\mathrm{E}R_{1}^{*}<\infty$. Choose $\delta>0$ to be small enough such that
$$
\delta_{1}:=\mathrm{E}(U_{i}-\delta R_{i}^{*})>0.
$$
Then, letting $Y_{i} :=(U_{i}-\delta R_{i}^{*})-\mathrm{E}(U_{i}-\delta R_{i}^{*})$, we use \eqref{eqn:bound_intersection} with Markov's inequality to have

\begin{align*}
	\mathrm{P}(G\cap B_{1})\leq \mathrm{P}(\sum_{i=1}^{n}Y_{i}\leq-n\delta_{1})&\leq\frac{1}{(n\delta_{1})^{3}}\mathrm{E}|\sum_{i=1}^{n}Y_{i}|^{3}
	\\
	&\leq\frac{n\mathrm{E}|Y_{1}|^{3}+\sqrt{8/\pi}(n\mathrm{E}Y_{1}^{2})^{3/2}}{(n\delta_{1})^{3}}\leq\frac{\mathrm{C}}{n^{3/2}}
\end{align*}

where $\mathrm{C}:=(\mathrm{E}|Y_{1}|^{3}+\sqrt{8/\pi}\,(\mathrm{E}Y_{1}^{2})^{3/2})/\delta_{1}^{3}$, which depends on $\delta_{1}>0, \mathrm{E}Y_{1}^{2}<\infty,$ and $\mathrm{E}|Y_{1}|^{3}<\infty$. However, this does not depend on $n$.

Now, one notices $B_{2}$ implies at least one of the following events:
\begin{align*}
	B_{21} &=\displaystyle\{\overline{U}\leq\frac{1}{2}\mathrm{E}U_{1}\} \\
	B_{22} &=\{\overline{R^{*}}\geq 1+\mathrm{E}R_{1}^{*}\}, \textrm{ or}\\ 
	B_{23} &= \displaystyle \{|\overline{Z}|\geq\frac{1}{8}(\mathrm{E}U_{1})^{2}/(1+\mathrm{E}R_{1}^{*})\}. 
\end{align*}

So,
\begin{equation}
	\Prob(B_{2})\leq \Prob(B_{21})+\Prob(B_{22})+\Prob(B_{23}).
\end{equation}

The bounding of each of the probabilities $\mathrm{P}(B_{21})$, $\mathrm{P}(B_{22})$ , $\mathrm{P}(B_{23})$ is quite similar to the bounding of $\mathrm{P}(G\cap B_{1})$ -- because 

\begin{align*}
	\Prob(B_{21})&=\Prob(\sum_{i=1}^{n}Y_{i,21}\leq-n\delta_{21}), \\
	\Prob(B_{22})&=\Prob(\sum_{i=1}^{n}Y_{i,22}\geq n\delta_{22}) \textrm{ ,and } \\ 
	\Prob(B_{23})&= \Prob(\sum_{i=1}^{n}|Y_{i,23}|\geq n\delta_{23}).
\end{align*} 

It follows that 

\begin{equation}\label{GandB}
	\mathrm{P}(G\cap B)\leq \mathrm{P}(G\cap B_{1})+\mathrm{P}(B_{2})\leq\frac{\mathrm{C}}{n^{3/2}}, 
\end{equation}
where $\mathrm{C}$ depends on $\ell$, the measure $\mu$, and the choice of $\theta_{0}-$ but not on $n.$

On the other hand, if $\overline{R}\neq 0$ and $\overline{U}>0$, then $d_{-}=\displaystyle \frac{2\overline{Z}}{\overline{U}+\sqrt{\overline{U}^{2}-2\overline{Z}\overline{R}}}$. Here, the condition $\overline{U}>0$ is so the denominator of the latter ratio is nonzero. Thus, on the event $G\backslash B$ one has
\begin{equation}\label{inclusionrelation}
	\overline{U}>0 \textrm{ and } \displaystyle \hat{\theta}-\theta_{0}=\frac{2\overline{Z}}{\overline{U}+\sqrt{\overline{U}^{2}-2\overline{Z}\overline{R}}}\in[T_{-},\ T_{+}]
\end{equation}
where
\begin{equation}\label{Tpm}
	T_{\pm}\ :=\frac{2\overline{Z}}{\overline{U}+\sqrt{\overline{U}^{2}\mp 2|\overline{Z}|\overline{R^{*}}}
	}.
\end{equation}

\medskip

\subsection{General uniform and nonuniform bounds on the rate of convergence to normality for smooth nonlinear functions of sums of independent random vectors}

Denote the standard normal distribution function (d.f.) by $\Phi$. For any $\mathbb{R}^{d}$-valued random vector $\zeta$,

\[
\Vert\zeta\Vert_{p} :=(\mathrm{E}\Vert\zeta\Vert^{p})^{1/p} \textrm{ for any real } p\geq 1,
\]

where $\Vert$ . $\Vert$ denotes the Euclidean norm on $\mathbb{R}^{d}.$

Take any Borel-measurable functional $f:\mathbb{R}^{d}\rightarrow \mathbb{R}$ satisfying the following smoothness condition: there exist $\epsilon\in(0,\ \infty), M_{\epsilon}\in(0,\ \infty)$ , and a linear functional $L:\mathbb{R}^{d}\rightarrow \mathbb{R}$ such that

\begin{theorem}[Smoothness Condition]
	\begin{equation}\label{eqn:smoothness}
		|f(\displaystyle \mathrm{x})-L(\mathrm{x})|\leq\frac{M_{\epsilon}}{2}\Vert \mathrm{x}\Vert^{2} \textrm{ for all } \mathrm{x}\in \mathbb{R}^{d} \textrm{ with } \Vert \mathrm{x}\Vert\leq\epsilon .
	\end{equation}
	
\end{theorem}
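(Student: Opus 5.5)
The displayed Smoothness Condition is a hypothesis of the delta-method bound of \cite{Pinelis2016}, so what must actually be shown is that it holds for the functionals $f$ to which that bound will be applied. The plan is to verify it for the two bracketing statistics $T_{\pm}$ of \eqref{Tpm}, written as Borel functions of the centered sample mean of the i.i.d.\ vectors $(Z_i,U_i,R_i^{*})$. Write $\mu_U:=\mathrm{E}U_1>0$ and $\mu_R:=\mathrm{E}R_1^{*}$, and recall $\mathrm{E}Z_1=0$. For $\mathrm{x}=(x_1,x_2,x_3)\in\mathbb{R}^{3}$ put
\[
f_{\pm}(\mathrm{x})\;:=\;\frac{2x_1}{(\mu_U+x_2)+\sqrt{(\mu_U+x_2)^{2}\mp 2\abs{x_1}(\mu_R+x_3)}}
\]
when $\norm{\mathrm{x}}\le\epsilon$, and $f_{\pm}(\mathrm{x}):=0$ otherwise. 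Then $T_{\pm}=f_{\pm}(\overline{Z},\overline{U}-\mu_U,\overline{R^{*}}-\mu_R)$. The candidate linear functional is $L(\mathrm{x}):=x_1/\mu_U$. This is the derivative of $f_\pm$ at $0$, and it matches the standardization $\sqrt{nI_2^2/I_1}$ once the normalization remark is taken into account.

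\textbf{Choice of $\epsilon$.} Pick $\epsilon\le\min\{\mu_U/2,\ \mu_U^{2}/(8(\mu_R+1)),\ 1\}$. Then on $\norm{\mathrm{x}}\le\epsilon$ we have $\mu_U+x_2\ge\mu_U/2$ and $2\abs{x_1}(\mu_R+x_3)\le\mu_U^{2}/8$. So the radicand is at least $\mu_U^{2}/8>0$ and the denominator $D_{\pm}(\mathrm{x})$ is at least $\mu_U/2$. In particular $f_\pm$ is well defined and bounded on the ball. Borel measurability of $f_\pm$ on all of $\mathbb{R}^{3}$ is immediate from the piecewise definition.

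\textbf{The quadratic bound.} The one wrinkle is that $f_\pm$ is not $C^{2}$: it depends on $\abs{x_1}$. I would therefore avoid a Hessian argument and bound the difference directly:
\[
f_{\pm}(\mathrm{x})-\frac{x_1}{\mu_U}\;=\;2x_1\,\frac{2\mu_U-D_{\pm}(\mathrm{x})}{2\mu_U\,D_{\pm}(\mathrm{x})}.
\]
Since $D_\pm\ge\mu_U/2$, it suffices to show $\abs{2\mu_U-D_\pm(\mathrm{x})}\le K\norm{\mathrm{x}}$ on the ball. Write $2\mu_U-D_\pm=-x_2+\bigl(\mu_U-\sqrt{\cdot}\,\bigr)$. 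For the square root, use $\abs{a-\sqrt{b}}=\abs{a^{2}-b}/(a+\sqrt b)$ with $a=\mu_U$. The numerator $a^2-b$ is bounded by $2\mu_U\abs{x_2}+x_2^{2}+2\abs{x_1}(\mu_R+\epsilon)$, which is $O(\norm{\mathrm{x}})$, and the denominator is at least $\mu_U$. The $\abs{x_1}$ nonsmoothness is harmless here because it enters only to first order inside $D_\pm$, and is then multiplied by the prefactor $x_1$. This gives $\abs{f_\pm(\mathrm{x})-L(\mathrm{x})}\le\frac{2K}{\mu_U^{2}}\norm{\mathrm{x}}^{2}$, so \eqref{eqn:smoothness} holds with $M_\epsilon=4K/\mu_U^{2}$, where $K$ depends only on $\mu_U$, $\mu_R$ and $\epsilon$.

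\textbf{Main obstacle: the nonsmooth $\abs{x_1}$.} The hardest point is exactly the $\abs{x_1}$ term, because it rules out the routine Taylor/Hessian argument used for the smooth map $H$ in Lemma~\ref{lem:mv-proxy}. The explicit algebraic bound above is what I would rely on. I would also check that no constant degenerates: $L\neq0$ because $\mu_U>0$, and the variance of the linear part, $\mathrm{E}Z_1^{2}/\mu_U^{2}$, is positive by assumption~2. These two facts supply the nondegeneracy hypothesis of \cite{Pinelis2016} for the subsequent application.
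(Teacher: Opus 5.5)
Your proposal is correct and makes explicit the paper's own one-line justification: the paper verifies the condition for $f_\pm$ with $L_\pm(\mathrm{x})=x_1/I_2(\theta_0)$ on the grounds that the denominator of $f_\pm$ stays bounded away from $0$ near the origin, and your estimate $\abs{2\mu_U-D_\pm(\mathrm{x})}\le K\norm{\mathrm{x}}$, combined with $D_\pm\ge\mu_U/2$, is that same reasoning with constants attached. There are two harmless slips: with your $\epsilon$ the term $2\abs{x_1}\abs{\mu_R+x_3}$ is in general bounded only by $\mu_U^2/4$, not $\mu_U^2/8$, so the radicand is merely nonnegative (still enough for $D_\pm\ge\mu_U/2$ and for your identity $\abs{\mu_U-\sqrt{b}}=\abs{\mu_U^2-b}/(\mu_U+\sqrt{b})$); and because you truncate $f_\pm$ to the ball rather than to the paper's set $\mathcal{D}$, the identity $T_\pm=f_\pm(\overline{V})$ holds only on $\set{\norm{\overline{V}}\le\epsilon}$, whose complement must be added to the bad event at a cost of $O(n^{-3/2})$.
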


Thus, $f(0)=0$ and $L$ necessarily coincides with the first Fr\'{e}chet derivative, $f'(0)$ , of the function $f$ at $0$. Moreover, for the smoothness condition to hold, it is enough that 

\[
M_{\epsilon}\geq M_{\epsilon}^{*} :=\displaystyle \sup\{\frac{1}{\Vert \mathrm{x}\Vert^{2}}|\frac{\mathrm{d}^{2}}{\mathrm{d}t^{2}}f(\mathrm{x}+t\mathrm{x})|_{t=0}|\ :\ \mathrm{x}\in \mathbb{R}^{d},\ 0<\Vert \mathrm{x}\Vert\leq\epsilon\}.
\]

Notice that $f$ does not need to be twice differentiable at $0$. One example is if $d=1$ and $f(x)= \displaystyle \frac{x}{1+|x|} \textrm{ for } x\in \mathbb{R}.$

Let $V, V_{1}$, . . . , $V_{n}$ be i.i.d. random vectors in $\mathbb{R}^{d}$, with $\E V=0$ and
$$
\overline{V}:=\frac{1}{n}\sum_{i=1}^{n}V_{i}.
$$
And let
\begin{equation}\label{sigmatilde}
	\tilde{\sigma} :=\Vert L(V)\Vert_{2}, v_{3} :=\Vert V\Vert_{3}, \textrm{ and } \varsigma_{3} :=\displaystyle \frac{\Vert L(V)\Vert_{3}}{\tilde{\sigma}}.
\end{equation}

\begin{theorem}\label{berryesseen}
	Suppose that the smoothness condition holds and that $\tilde{\sigma}>0$ and $v_{3}<\infty$. Then for all $z\in \mathbb{R}$
	\begin{equation}\label{berryesseen1}
		|\displaystyle \mathrm{P}(\frac{f(\overline{V})}{\tilde{\sigma}/\sqrt{n}}\leq z)-\Phi(z)|\leq\frac{\mathrm{C}}{\sqrt{n}},
	\end{equation}
	where $\mathrm{C}$ is a finite positive expression that depends only on the function $f$ and the moments $\tilde{\sigma}$, $\varsigma_{3}$, and $v_{3}$. Moreover, for any $\omega\in(0,\ \infty)$ and for all
	\begin{equation}\label{omega}
		z\in(0,\ \omega\sqrt{n}],
	\end{equation}
	
	one has
	
	\begin{equation}\label{berryesseen2}
		|\mathrm{P}(\frac{f(\overline{V})}{\tilde{\sigma}/\sqrt{n}}\leq z)-\Phi(z)|\leq\frac{\mathrm{C}_{\omega}}{z^{3}\sqrt{n}}
	\end{equation}
	
	where $C_{\omega}$ is a positive, finite, and only depends on $f$ through the smoothness condition, the moments $\tilde{\sigma}$, $\varsigma_{3}$, and $v_3$, and $\omega$.
\end{theorem}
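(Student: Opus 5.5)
Since the statement is Theorem~3.8 of \cite{Pinelis2016} specialized to $p=3$, the cleanest route is to follow its structure: split $f(\overline V)$ into a linear part and a quadratic remainder, get a Berry--Esseen bound for the linear part, and control the remainder by a Stein-type perturbation argument rather than by a crude tail bound. Set $\xi_i:=L(V_i)/(\tilde\sigma\sqrt n)$ and $W:=\sum_i\xi_i=\sqrt n\,L(\overline V)/\tilde\sigma$. The $\xi_i$ are i.i.d., centered, with $\sum_i\E\xi_i^2=1$ and $\sum_i\E|\xi_i|^3=\varsigma_3^3/\sqrt n$. The classical Berry--Esseen theorem therefore gives $\sup_z|\Prob(W\le z)-\Phi(z)|\le C\varsigma_3^3/\sqrt n$. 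The nonuniform Bikelis--Nagaev form gives the bound $C\varsigma_3^3/((1+|z|)^3\sqrt n)$, which is the linear ingredient for \eqref{berryesseen2}.

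Next I would isolate the region where the smoothness condition \eqref{eqn:smoothness} is usable. On the event $E:=\{\norm{\overline V}\le\epsilon\}$ write $T:=\sqrt n f(\overline V)/\tilde\sigma=W+\Delta$ with $|\Delta|\le \frac{M_\epsilon\sqrt n}{2\tilde\sigma}\norm{\overline V}^2$. For the complement, the Rosenthal/Pinelis inequality for sums of independent vectors gives $\E\norm{\sum_i V_i}^3\le C(nv_3^3+n^{3/2}v_2^3)$, and Markov's inequality then yields $\Prob(E^{\mathrm c})\le C v_3^3\epsilon^{-3}n^{-3/2}$. This is negligible for the uniform bound. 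For the nonuniform bound it is also negligible on $z\in(0,\omega\sqrt n]$, because there $z^3 n^{-3/2}\cdot n^{1/2}\le\omega^3$; this is exactly where the $\omega$-dependence of $C_\omega$ enters. On $E$ I would replace $\Delta$ by the truncated remainder $\Delta\mathbf 1_E$, which is defined everywhere.

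The main step, and the one I expect to be the real obstacle, is showing that $\Delta$ perturbs the distribution of $W$ only by $O(n^{-1/2})$. The naive approach fails. The remainder $\Delta$ is itself of exact order $n^{-1/2}$, and one needs $\Prob(z-|\Delta|<W\le z+|\Delta|)=O(n^{-1/2})$. Markov's inequality on $\norm{\overline V}^2$ loses this order.

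My first attempt would be the Chen--Shao concentration inequality for nonlinear statistics $T=W+\Delta$. It bounds the Kolmogorov distance by the Berry--Esseen term plus $\E|W\Delta|+\sum_i\E|\xi_i(\Delta-\Delta^{(i)})|$, where $\Delta^{(i)}$ is any variable independent of $V_i$. The term $\E|W\Delta|\lesssim n\,\E|L(\overline V)|\norm{\overline V}^2$ is $O(n^{-1/2})$ by Hölder's inequality and third-moment bounds on $\norm{\overline V}$.

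The delicate part is the leave-one-out term. I would take $\Delta^{(i)}$ to be the remainder evaluated at $\overline V^{(i)}:=\overline V-V_i/n$, and then need $|\Delta-\Delta^{(i)}|\lesssim \sqrt n\,\norm{V_i}(\norm{\overline V}+\norm{V_i}/n)/n$. The smoothness condition alone only bounds $|f-L|$, not its increments. I would therefore use the sufficient form $M_\epsilon\ge M_\epsilon^*$, a second-derivative bound along rays, to get a Lipschitz estimate on $f-L$ of size $M_\epsilon\norm{x}$ on the ball. If that is unavailable, I would fall back on Pinelis's device of comparing $\Prob(T\le z)$ with a randomized linearization. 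With the Lipschitz estimate, $\sum_i\E|\xi_i||\Delta-\Delta^{(i)}|=O(n^{-1/2})$ follows from $v_3<\infty$.

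For the nonuniform bound \eqref{berryesseen2} I would run the same decomposition with Chen--Shao's nonuniform version. That version carries the weight $(1+z)^{-3}$, and it additionally needs tail bounds $\Prob(|\Delta|>(1+z)/2)$, which are again controlled by third moments of $\norm{\overline V}$ on $z\le\omega\sqrt n$. Collecting constants yields $C$ and $C_\omega$, depending only on $\epsilon$, $M_\epsilon$, $\tilde\sigma$, $\varsigma_3$, $v_3$ and $\omega$.
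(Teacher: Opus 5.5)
The paper does not prove this theorem. It imports it as Theorem~3.8 of \cite{Pinelis2016}, so your sketch has to stand on its own. Your linear ingredient and your handling of $E^{\mathrm c}$ in the uniform bound are fine, but the central step fails.

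The leave-one-out term $\sum_i\E\abs{\xi_i}\abs{\Delta-\Delta^{(i)}}$ needs an increment bound for $f-L$. The hypothesis supplies only the pointwise envelope $\abs{f(x)-L(x)}\le\frac{M_\epsilon}{2}\norm{x}^2$, for a merely Borel $f$. The quantity $M_\epsilon^*$ enters the paper only as a \emph{sufficient} condition for that envelope, not as an assumption. Even when it is finite, it controls only the radial second derivative $\frac{d^2}{dt^2}f(x+tx)$, whereas $\overline V-\overline V^{(i)}=V_i/n$ points in an arbitrary direction. For $d\ge2$, the function $f(x)=L(x)+\norm{x}^2h(x/\norm{x})$ with $h$ bounded but discontinuous has $M_\epsilon^*\le2\sup\abs{h}$, yet $f-L$ is not even continuous. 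From the hypotheses the best available bound is $\abs{\Delta-\Delta^{(i)}}\le\abs{\Delta}+\abs{\Delta^{(i)}}$, each term of order $n^{-1/2}$. That makes $\sum_i\E\abs{\xi_i}\cdot n^{-1/2}$ of order one, not $n^{-1/2}$. The ``randomized linearization'' fallback is not an argument you have specified.

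The missing idea is monotone bracketing, the same device the paper uses one level up with $f_\pm$. Put $W_\pm:=W\pm\frac{M_\epsilon\sqrt n}{2\tilde\sigma}\norm{\overline V}^2$ and $E:=\set{\norm{\overline V}\le\epsilon}$. On $E$ the envelope gives $W_-\le\sqrt n\,f(\overline V)/\tilde\sigma\le W_+$, so
\[
\Prob(W_+\le z)-\Prob(E^{\mathrm c})\ \le\ \Prob\big(\sqrt n\,f(\overline V)/\tilde\sigma\le z\big)\ \le\ \Prob(W_-\le z)+\Prob(E^{\mathrm c}).
\]
Now run Chen--Shao on $W_\pm$ instead of on $f$. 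Their remainders $\Delta_\pm:=W_\pm-W$ are explicit quadratics on all of $\mathbb{R}^d$, so no truncation is needed. Moreover $\norm{\overline V}^2-\norm{\overline V^{(i)}}^2=\frac2n\langle\overline V^{(i)},V_i\rangle+\frac1{n^2}\norm{V_i}^2$, with $\overline V^{(i)}$ independent of $V_i$. This makes the leave-one-out term $O(n^{-1/2})$ under $v_3<\infty$.

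There is a second error, in the nonuniform part. For $z\in(0,\omega\sqrt n]$ you need $\Prob(E^{\mathrm c})\le C_\omega z^{-3}n^{-1/2}$, which at $z=\omega\sqrt n$ means $O(n^{-2})$. Markov's inequality with the Rosenthal third-moment bound gives only $O(n^{-3/2})$. The check you display is false: $z^3n^{-3/2}\cdot n^{1/2}=z^3/n$, which reaches $\omega^3\sqrt n$.

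What is needed is a Fuk--Nagaev-type inequality for sums of independent random vectors:
\[
\Prob\big(\norm{\textstyle\sum_iV_i}>n\epsilon\big)\le C\,n\,v_3^3\,(n\epsilon)^{-3}+e^{-cn},
\]
with $c>0$ depending on $\epsilon$ and $v_3$. This is $O(n^{-2})$ and yields $\Prob(E^{\mathrm c})\,z^3\sqrt n\le C'\omega^3$; that is where $\omega$ genuinely enters. The tail term for the remainder in the nonuniform Chen--Shao inequality needs the same tool. Since $\abs{\Delta_\pm}$ is a fixed multiple of $\sqrt n\,\norm{\overline V}^2$, a third-moment Markov bound on $\Prob(\abs{\Delta_\pm}\gtrsim 1+z)$ reaches the required order $n^{-1/2}(1+z)^{-3}$ only for $z\lesssim n^{1/6}$, not on all of $(0,\omega\sqrt n]$.
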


\subsection{Applying bracketing}

Now let $d=3$ and then let

$$\mathcal{D} :=\{\mathrm{x}=(x_{1},\ x_{2},\ x_{3})\in \mathbb{R}^{d}=\mathbb{R}^{3}\ :\ x_{2}+\mathrm{E}U_{1}>0,\ (x_{2}+\mathrm{E}U_{1})^{2}>2|x_{1}||x_{3}+\mathrm{E}R_{1}^{*}|\}.$$

By \eqref{eqn:R} and assumptions 2 and 4 for $\ell$ , $\mathrm{E}U_{1}=I_2(\theta_{0})\in(0,\ \infty)$ and $\mathrm{E}R_{1}^{*}\in[0,\ \infty$). So, for some real $\epsilon>0$, the set $\mathcal{D}$ contains the $\epsilon$-neighborhood of the origin $0$ of $\mathbb{R}^{3}.$

Define functions $f\pm:\mathbb{R}^{3}\rightarrow \mathbb{R}$ by the formula

\begin{equation}\label{definef}
	f_{\pm}(\displaystyle \mathrm{x})=f_{\pm}(x_{1},\ x_{2},\ x_{3})=\frac{2x_{1}}{x_{2}+\mathrm{E}U_{1}+\sqrt{(x_{2}+\mathrm{E}U_{1})^{2}\mp 2|x_{1}||x_{3}+\mathrm{E}R_{1}^{*}|}}
\end{equation}

for $\mathrm{x}=(x_{1},\ x_{2},\ x_{3})\in \mathcal{D}$, and let $f(\mathrm{x}) :=0$ if $\mathrm{x}\in \mathbb{R}^{3}\backslash \mathcal{D}$. 

Clearly, $f_{\pm}(0)=0,$
\begin{equation}\label{Lplusminus}
	L_{\pm}(\displaystyle \mathrm{x}):=f_{\pm}'(0)(\mathrm{x})=\frac{x_{1}}{\mathrm{E}U_{1}}=\frac{x_{1}}{I_2(\theta_{0})}
\end{equation}
for $\mathrm{x}=(x_{1},\ x_{2},\ x_{3})\in \mathbb{R}^{3}$, and the smoothness condition \eqref{eqn:smoothness} holds for some $\epsilon$ and $M_{\epsilon}$ in $(0,\ \infty)$ --because, as was noted above, $\mathrm{E}U_{1}=I_2(\theta_{0})\in(0,\ \infty)$ and $\mathrm{E}R_{1}^{*}\in[0,\ \infty$), and hence the denominator of the ratio in \eqref{definef} is bounded away from $0$ for $\mathrm{x}=(x_{1},\ x_{2},\ x_{3})$ in a neighborhood of $0.$

Next, let
\begin{equation}\label{Vi}
	V_{i}:=(Z_{i},\ U_{i}-\mathrm{E}U_{i},\ R_{i}^{*}-\mathrm{E}R_{i}^{*})
\end{equation}
for $i=1,\ldots,n$, with $Z_{i}, U_{i}, R_{i}^{*}$ as defined in \eqref{eqn:R} and \eqref{eqn:Z} . Then, by \eqref{sigmatilde}, \eqref{Lplusminus} , and condition 2 , for $f=f\pm,$
\begin{equation}\label{sigmatilde1}
	\tilde{\sigma} = \sqrt{\frac{\mathrm{E}Z_{1}^{2}}{I_2(\theta_{0})^{2}}}=\frac{\sqrt{I_1(\theta_0)}}{I_2(\theta_{0})}>0
\end{equation}

and $ v_{3}^{3}=\mathrm{E}\Vert V\Vert^{3}<\infty$ by the third and fourth conditions. This shows that all the required conditions for \eqref{berryesseen} are satisfied for $ f=f\pm\cdot$.

Moreover, by \eqref{Vi}, \eqref{definef}, and \eqref{Tpm},
$$
T_{\pm}=f_{\pm}(\overline{V})
$$
on the event $G\backslash B$. So, by the inclusion relation in \eqref{inclusionrelation} (which holds on the event $G\backslash B=(G^{\mathrm{c}}\cup B)^{\mathrm{c}}$, where $\mathrm{c}$ denotes the complement) and \eqref{sigmatilde1} , inequality \eqref{berryesseen1} in Theorem~\ref{berryesseen} implies
$$
\mathrm{P}(\sqrt{n/I_1(\theta_{0})}I_2(\theta_0)(\hat{\theta}-\theta_{0})\leq z)\leq \mathrm{P}(\sqrt{n/I_1(\theta_{0})}I_2(\theta_0)f_{-}(\overline{V})\leq z)+\mathrm{P}(G^{\mathrm{c}}\cup B)
$$
$$
\leq\Phi(z)+\frac{\mathrm{C}}{\sqrt{n}}+\mathrm{P}(G^{\mathrm{c}}\cup B)
$$
and, quite similarly,
$$
\mathrm{P}(\sqrt{n/I_1(\theta_{0})}I_2(\theta_0)(\hat{\theta}-\theta_{0})\leq z)\geq \mathrm{P}(\sqrt{n/I_1(\theta_{0})}I_2(\theta_0)f_{+}(\overline{V})\leq z)-\mathrm{P}(G^{\mathrm{c}}\cup B)
$$
$$
\geq\Phi(z)-\frac{\mathrm{C}}{\sqrt{n}}-\mathrm{P}(G^{\mathrm{c}}\cup B)\ ,
$$
for all real $z$. Note that $\mathrm{P}(G^{\mathrm{c}}\cup B)=\mathrm{P}(G^{\mathrm{c}})+\mathrm{P}(G\cap B)$ . It follows now by \eqref{eqn:goodevent} and \eqref{GandB} that
\begin{equation}
	|\displaystyle \mathrm{P}(\sqrt{n/I_1(\theta_{0})}I_2(\theta_0)(\hat{\theta}-\theta_{0})\leq z)-\Phi(z)|\leq\frac{\mathrm{C}}{\sqrt{n}}+\mathrm{P}(|\hat{\theta}-\theta_{0}|>\delta)
\end{equation}
for all real $z$. Quite similarly, but using \eqref{berryesseen2} instead of \eqref{berryesseen1} , one has
\begin{equation}
	|\displaystyle \mathrm{P}(\sqrt{n/I_1(\theta_{0})}I_2(\theta_0)(\hat{\theta}-\theta_{0})\leq z)-\Phi(z)|\leq\frac{\mathrm{C}}{z^{3}\sqrt{n}}+\mathrm{P}(|\hat{\theta}-\theta_{0}|>\delta)
\end{equation}
for $z$ as in \eqref{omega}.

Given rather standard regularity conditions, the remainder term $\mathrm{P}(|\hat{\theta}-\theta_{0}|>\delta)$ typically decreases exponentially fast in $n$ and thus is negligible compared with the error term $\displaystyle \frac{\mathrm{C}}{\sqrt{n}}$, and even with the error term $\displaystyle \frac{\mathrm{C}}{z^{3}\sqrt{n}}$ under condition \eqref{omega}. Some details on this can be found in the following section.

\subsection{Bounding the remainder}
\label{sec:remainder}

We bound the remainder using conditions (A5) and (A6). On the event
$$
E_n:=\set*{\sup_{\theta\in\Theta}\abs{n^{-1}L_n(\theta)-\E\,\ell_{X,Z}(\theta)}<D(\delta)/2},
$$
every $\theta\in\Theta$ with $\abs{\theta-\theta_0}\ge\delta$ satisfies
$$
n^{-1}L_n(\theta)\ <\ \E\,\ell_{X,Z}(\theta)+\tfrac{1}{2}D(\delta)
\ \le\ \E\,\ell_{X,Z}(\theta_0)-\tfrac{1}{2}D(\delta)
\ <\ n^{-1}L_n(\theta_0),
$$
where the middle inequality is the definition of $D(\delta)$ in (A5) and the outer two hold on $E_n$. No such $\theta$ can therefore maximize $L_n$, so $E_n\subseteq\set{\abs{\hat\theta-\theta_0}<\delta}$. By (A6),
\begin{equation}\label{eq:exp-decay}
	\mathrm{P}(\abs{\hat{\theta}-\theta_{0}}>\delta)\ \leq\ \Prob(E_n^{\mathrm{c}})\ \leq\ C_1 e^{-c_2 n},
\end{equation}
which decays exponentially in $n$ and is thus negligible beside both $\mathrm{C}/\sqrt{n}$ and $\mathrm{C}_\omega/(z^{3}\sqrt{n})$ under condition \eqref{omega}.

\begin{remark}[Why not concavity]
\label{rem:concavity}
Pinelis \cite{Pinelis2017} bounds the corresponding remainder for maximum likelihood estimators by assuming the per-observation criterion is concave in $\theta$, which is natural for log-likelihoods of exponential families. That route is unavailable in our setting. Whenever $\sup_{\theta}p_{\theta}(x)<\infty$, the criterion $\ell_{x,z}(\theta)=-(p_{\theta}(x)-z)^{2}$ is bounded on $\Theta$ and non-constant; since a concave function that is bounded below on $\mathbb{R}$ is constant, $\ell_{x,z}$ is concave for no such family---in particular for neither of the families used in Section~\ref{sec:elicitation-experiments}. Conditions (A5) and (A6) replace it, and this is the adjustment to the assumptions of \cite{Pinelis2017} that the M-estimation setting requires.

Condition (A5) is mild. Because $\E[Z\mid x]=p_{\theta_{0}}(x)$, the cross term vanishes and
$$
\E\,\ell_{X,Z}(\theta_{0})-\E\,\ell_{X,Z}(\theta)=\E_{x\sim p_{\theta_{0}}}\bkt*{\paren*{p_{\theta}(x)-p_{\theta_{0}}(x)}^{2}},
$$
which does not depend on $\sigma$ and vanishes only at $\theta_{0}$ for an identifiable family. Expanding, $D(\theta)=\E[p_{\theta}^{2}]-2\,\E[p_{\theta}p_{\theta_{0}}]+\E[p_{\theta_{0}}^{2}]$ with all expectations under $p_{\theta_{0}}$. The cross term vanishes as $p_{\theta}$ separates from $p_{\theta_{0}}$, so $D(\theta)\to\E[p_{\theta_{0}}^{2}]+\lim\E[p_{\theta}^{2}]\ \ge\ \E[p_{\theta_{0}}^{2}]>0$, and (A5) therefore holds on an unbounded $\Theta$ with no compactness assumption. The two limits differ between our families: for the normal location family $\E[p_{\theta}^{2}]\to0$, so $D(\theta)\to\E[p_{\theta_{0}}^{2}]=1/(2\pi\sqrt{3})\approx0.0919$; for $\mathrm{Beta}(\theta,2)$ with $\theta_{0}=3$ the density concentrates near $x=1$ rather than escaping, and $\E_{p_{\theta_{0}}}[p_{\theta}^{2}]=12\theta^{2}(\theta+1)^{2}B(2\theta+1,4)\to9/2$, so $D(\theta)\to9/2+\E[p_{\theta_{0}}^{2}]\approx6.56$. In both cases the limit is bounded away from zero, which is all (A5) requires. Condition (A6) is a uniform law of large numbers with an exponential rate; it is discharged in Section~\ref{sec:appendix-a6}, where it is shown to follow from two elementary conditions that hold for both families used here---one of which is precisely the boundedness that rules concavity out. These are the same two conditions under which consistency was obtained in Section~\ref{sec:consistency}, where the parallel obstruction to a monotonicity argument is noted.
\end{remark}

\section{Appendix: verification of condition (A6)}
\label{sec:appendix-a6}

Condition (A6) is a uniform law of large numbers with an exponential rate. We show it follows from the following two conditions.

\begin{enumerate}
	\item[(B1)] $\Theta\subset\mathbb{R}$ is compact, with diameter $T:=\sup\Theta-\inf\Theta$, and both
	$$
	\bar P:=\sup_{\theta\in\Theta}\ \sup_{x\in\mathcal{X}}\ p_{\theta}(x)<\infty
	\qquad\text{and}\qquad
	\dot P:=\sup_{\theta\in\Theta}\ \sup_{x\in\mathcal{X}}\ \abs*{\partial_{\theta}p_{\theta}(x)}<\infty .
	$$
	\item[(B2)] $Z=p_{\theta_{0}}(X)(1+\sigma\xi)$ with $\xi$ independent of $X$, $\E\xi=0$, $\E\xi^{2}=1$, and $\xi$ sub-Gaussian: $\E e^{\lambda\xi}\le e^{\lambda^{2}b^{2}/2}$ for all real $\lambda$ and some $b<\infty$.
\end{enumerate}

\begin{lemma}
\label{lem:a6}
Under (B1) and (B2), for every $\eta>0$ there are constants $C_1,c_2\in(0,\infty)$, depending on $\eta$ but not on $n$, such that
$$
\Prob\Big(\sup_{\theta\in\Theta}\abs*{n^{-1}L_n(\theta)-\E\,\ell_{X,Z}(\theta)}\ \ge\ \eta\Big)\ \le\ C_1 e^{-c_2 n}.
$$
In particular (A6) holds, on taking $\eta=D(\delta)/2$.
\end{lemma}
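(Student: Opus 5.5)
Our plan is a standard chaining-free covering argument: an exponential bound at each fixed $\theta$ on a finite grid, together with a Lipschitz bound in $\theta$ to pass from the grid to all of $\Theta$.

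\emph{Expansion.} Write $Z=p_{\theta_0}(X)+\sigma p_{\theta_0}(X)\xi$ and expand
\[
\ell_{X,Z}(\theta)=-(p_\theta(X)-Z)^2=-p_\theta(X)^2+2p_\theta(X)Z-Z^2 .
\]
The term $-Z^2$ does not depend on $\theta$, so its centered average is a single scalar deviation. It is handled once, by a sub-exponential (Bernstein) bound: $Z^2\le 2\bar P^2(1+\sigma^2\xi^2)$, and $\xi^2$ is sub-exponential by (B2). The remaining part of the criterion is
\[
g_\theta(X,\xi)=-p_\theta(X)^2+2p_\theta(X)p_{\theta_0}(X)(1+\sigma\xi).
\]
For each fixed $\theta$, $g_\theta$ is the sum of a bounded term (bounded by $3\bar P^2$) and $2\sigma p_\theta p_{\theta_0}\xi$. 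The latter is sub-Gaussian with variance proxy at most $(2\sigma\bar P^2 b)^2$, since conditional on $X$ it is a bounded multiple of $\xi$. Hoeffding's inequality for the bounded part and the sub-Gaussian Chernoff bound for the $\xi$ part then give
\[
\Prob\big(\abs{n^{-1}\textstyle\sum_i g_\theta-\E g_\theta}\ge\eta/3\big)\le 4e^{-cn\eta^2},
\]
with $c$ depending only on $\bar P,\sigma,b$.

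\emph{Lipschitz control in $\theta$.} By (B1) and the mean value theorem,
\[
\abs{g_\theta-g_{\theta'}}\le\big(2\bar P\dot P+2\dot P\bar P(1+\sigma\abs{\xi})\big)\abs{\theta-\theta'}=:K(\xi)\,\abs{\theta-\theta'},
\]
and $\E K<\infty$. Take a grid $\theta_1,\dots,\theta_N$ with mesh $h$, so that $N\le T/h+1$. Then
\[
\sup_\theta\abs{n^{-1}L_n-\E\ell}\le\max_j\abs{\cdots}_{\theta_j}+h\big(n^{-1}\textstyle\sum K(\xi_i)+\E K\big)+(\text{the }Z^2\text{ deviation}).
\]
We choose $h$ so that $h(2\E K+1)\le\eta/3$. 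The event $n^{-1}\sum_i K(\xi_i)>\E K+1$ has exponentially small probability, since $\abs{\xi}$ is sub-Gaussian and a Chernoff bound applies. A union bound over the $N$ grid points, whose number does not depend on $n$, together with the $Z^2$ bound and this event, yields $C_1e^{-c_2n}$ with $C_1\approx 4N+4$ and $c_2$ the smallest of the rates involved. Taking $\eta=D(\delta)/2$ then gives (A6).

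\emph{Main obstacle.} The delicate point is that $\xi$ is unbounded, so neither the summands nor the Lipschitz constant are uniformly bounded and Hoeffding alone does not suffice. The plan handles this by using the multiplicative structure from (B2): the noise enters linearly through $\xi$ in the cross term, which is sub-Gaussian. It enters quadratically only through the $\theta$-free term $Z^2$, which is sub-exponential and is controlled a single time rather than uniformly. The random Lipschitz constant is controlled on a high-probability event instead of almost surely. Note that $\dot P<\infty$ uniformly over $x$ is exactly what keeps $K$ free of $X$; without it one would need an envelope moment condition.
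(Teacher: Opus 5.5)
Your proposal is correct and is essentially the paper's argument: a finite net of $\Theta$ whose size does not depend on $n$, the same random Lipschitz constant $2\dot P\bar P(2+\sigma\abs{\xi})$ controlled on an exponentially likely event, exponential bounds at each grid point, and a union bound. The only difference is cosmetic. You split off the $\theta$-free term $Z^{2}$ and control it once, so your grid-point deviations are Hoeffding/sub-Gaussian; the paper instead applies a Bernstein inequality directly to $\ell_{X,Z}(\theta_{j})$, dominated by the sub-exponential envelope $M=\bar P^{2}(2+\sigma\abs{\xi})^{2}$.
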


\begin{proof}
By (B2), $\abs{Z}\le\bar P(1+\sigma\abs{\xi})$, so
\begin{equation}\label{eq:envelope}
	\sup_{\theta\in\Theta}\abs*{\ell_{X,Z}(\theta)}=\sup_{\theta\in\Theta}\paren*{p_{\theta}(X)-Z}^{2}\le\paren*{\bar P+\abs{Z}}^{2}\le\bar P^{2}\paren*{2+\sigma\abs{\xi}}^{2}=:M,
\end{equation}
and, since $\partial_{\theta}\ell_{x,z}(\theta)=-2(p_{\theta}(x)-z)\,\partial_{\theta}p_{\theta}(x)$,
\begin{equation}\label{eq:lipschitz}
	\sup_{\theta\in\Theta}\abs*{\partial_{\theta}\ell_{X,Z}(\theta)}\ \le\ 2\dot P\paren*{\bar P+\abs{Z}}\ \le\ 2\dot P\bar P\paren*{2+\sigma\abs{\xi}}=:\Lambda .
\end{equation}
Thus $\theta\mapsto\ell_{X,Z}(\theta)$ is Lipschitz on $\Theta$ with the random constant $\Lambda$. As $\xi$ is sub-Gaussian, $\Lambda$ is sub-Gaussian and $M$, being a squared sub-Gaussian variable, is sub-exponential; both have finite means, and we write $\bar\Lambda:=\E\Lambda\in(0,\infty)$.

Fix $\eta>0$ and put $\delta:=\eta/(6\bar\Lambda)$. By compactness choose $\theta_{1},\ldots,\theta_{N}\in\Theta$ with $N\le\lceil T/\delta\rceil+1$ such that every $\theta\in\Theta$ lies within $\delta$ of some $\theta_{j}$; note $N$ depends on $\eta$ but not on $n$. Write $\bar G_n(\theta):=n^{-1}L_n(\theta)-\E\,\ell_{X,Z}(\theta)$. If $\abs{\theta-\theta_{j}}\le\delta$ then, by \eqref{eq:lipschitz} applied to each summand and to the expectation,
$$
\abs*{\bar G_n(\theta)-\bar G_n(\theta_{j})}\ \le\ \delta\paren*{n^{-1}\textstyle\sum_{i=1}^{n}\Lambda_{i}+\bar\Lambda},
$$
so that
$$
\sup_{\theta\in\Theta}\abs*{\bar G_n(\theta)}\ \le\ \max_{j\le N}\abs*{\bar G_n(\theta_{j})}+\delta\paren*{n^{-1}\textstyle\sum_{i=1}^{n}\Lambda_{i}+\bar\Lambda}.
$$
On the event $A_n:=\set{n^{-1}\sum_{i}\Lambda_{i}\le 2\bar\Lambda}$ the second term is at most $3\delta\bar\Lambda=\eta/2$. Hence
$$
\Prob\Big(\sup_{\theta}\abs*{\bar G_n(\theta)}\ge\eta\Big)\ \le\ \Prob(A_n^{\mathrm{c}})+\sum_{j\le N}\Prob\big(\abs*{\bar G_n(\theta_{j})}\ge\eta/2\big).
$$
Both terms decay exponentially. The variables $\Lambda_{i}$ are i.i.d.\ and sub-exponential, so a standard Bernstein inequality for sub-exponential summands \cite{Vershynin2018} gives $\Prob(A_n^{\mathrm{c}})=\Prob\big(n^{-1}\sum_{i}\Lambda_{i}-\bar\Lambda\ge\bar\Lambda\big)\le e^{-c_{3}n}$ with $c_{3}>0$ depending only on the sub-exponential parameters of $\Lambda$. For each fixed $\theta_{j}$ the summands $\ell_{X_i,Z_i}(\theta_{j})$ are i.i.d.\ and, by \eqref{eq:envelope}, dominated by the sub-exponential envelope $M$; the same inequality gives $\Prob(\abs{\bar G_n(\theta_{j})}\ge\eta/2)\le 2e^{-c_{4}n}$ with $c_{4}>0$ depending on $\eta$ and on the parameters of $M$, but not on $n$ or $j$. Therefore
$$
\Prob\Big(\sup_{\theta}\abs*{\bar G_n(\theta)}\ge\eta\Big)\ \le\ e^{-c_{3}n}+2Ne^{-c_{4}n}\ \le\ (1+2N)e^{-\min(c_{3},c_{4})n},
$$
which is the assertion with $C_1=1+2N$ and $c_2=\min(c_{3},c_{4})$.
\end{proof}

Both conditions hold for the families used in Section~\ref{sec:elicitation-experiments}. For the normal location family on $\Theta=[-3,7]$, the interval searched in the experiments, $\bar P=(2\pi)^{-1/2}\approx0.3989$ and $\dot P=\sup_{u}\abs{\varphi(u)u}=\varphi(1)\approx0.2420$, with $T=10$; and $\xi\sim\mathcal{N}(0,1)$ satisfies (B2) with $b=1$. The beta shape family needs more care. Here $p_{\theta}(x)=\theta(\theta+1)x^{\theta-1}(1-x)$, which is unbounded on $(0,1)$ when $\theta<1$, and
$$
\partial_{\theta}p_{\theta}(x)=(2\theta+1)x^{\theta-1}(1-x)+\theta(\theta+1)(1-x)\,x^{\theta-1}\log x ,
$$
whose second term reduces to $2(1-x)\log x$ at $\theta=1$ and is therefore unbounded as $x\to0$. Condition (B1) thus fails at $\theta=1$ as well as below it, and requires $\Theta$ to be bounded away from $1$ from above: on $\Theta=[1.1,10]$, which contains $\theta_{0}=3$ with room to spare, $\bar P\approx4.262$ and $\dot P\approx7.399$, with $T=8.9$. The experiments of Section~\ref{sec:elicitation-experiments} search exactly this interval, so the guarantee applies to the estimator as implemented.

\begin{remark}[Two-parameter extension]\label{rem:a6-scale}
The lemma extends to the joint criterion of Section~\ref{sec:scale}, $\ell_{x,z}(\theta,c)=-(c\,p_{\theta}(x)-z)^{2}$ on $\Theta\times\mathcal{C}$ with $\mathcal{C}=[c_{\mathrm{lo}},c_{\mathrm{hi}}]\subset(0,\infty)$ compact and $c_{0}\in\mathcal{C}$, with only the constants changing. Under (B1) and (B2) the reports satisfy $\abs{Z}\le c_{\mathrm{hi}}\bar P(1+\sigma\abs{\xi})$, so the envelope \eqref{eq:envelope} holds with $\bar P$ replaced by $c_{\mathrm{hi}}\bar P$; the gradient $\nabla\ell_{x,z}(\theta,c)=-2(c\,p_{\theta}(x)-z)\big(c\,\partial_{\theta}p_{\theta}(x),\ p_{\theta}(x)\big)$ is bounded in norm by $2\big(c_{\mathrm{hi}}\dot P+\bar P\big)\,c_{\mathrm{hi}}\bar P\,(2+\sigma\abs{\xi})$, which replaces the Lipschitz constant \eqref{eq:lipschitz}. A $\delta$-net of the rectangle $\Theta\times\mathcal{C}$ requires $N_{1}N_{2}$ points, with $N_{1}\le\lceil T/\delta\rceil+1$ and $N_{2}\le\lceil(c_{\mathrm{hi}}-c_{\mathrm{lo}})/\delta\rceil+1$, and the union bound over the net gives the conclusion with $C_{1}=1+2N_{1}N_{2}$ and a $c_{2}>0$ of the same form. The separation quantity $D(\delta)$ is that of the pair, for which Section~\ref{sec:scale} gives the closed form via the correlation $\rho(\theta)$. The same accounting handles any bounded box: for $\Theta=\prod_{j=1}^{k}\Theta_{j}\subset\mathbb{R}^{k}$ with side lengths $T_{j}$, a $\delta$-net requires $\prod_{j}N_{j}$ points with $N_{j}\le\lceil T_{j}/\delta\rceil+1$, the envelope and Lipschitz bounds are unchanged in form (the gradient norm replaces the scalar derivative), and the conclusion holds with $C_{1}=1+2\prod_{j}N_{j}$. This is the version used by the multivariate results of Section~\ref{sec:multivariate}.
\end{remark}